% This is samplepaper.tex, a sample chapter demonstrating the
% LLNCS macro package for Springer Computer Science proceedings;
% Version 2.20 of 2017/10/04
%
\documentclass[runningheads,envcountsect,envcountsame]{llncs}
\usepackage{graphicx}
\usepackage{framed, amsmath, amsfonts, amssymb, enumitem}
\usepackage{balance,complexity,hyperref,pgfplots,tikz,tikz-3dplot,float}
\usetikzlibrary{shapes,positioning}

\usepackage[sort,nocompress]{cite}
\usepackage{subcaption}

\usepackage[noend]{algpseudocode}
\usepackage{algorithm}

\usepackage{comment}

\newtheorem{proc}[theorem]{Procedure}

\usepackage{microtype}%if unwanted, comment out or use option "draft"

\usetikzlibrary{snakes}

\def\ca#1{{\mathcal{#1}}}
\def\apxmark{$\!\!${\bf(*)}}

%%%%%%%% to restate theorems
\usepackage{environ}

\newcommand{\repeattheorem}[1]{%
	\begingroup
	\renewcommand{\thetheorem}{\ref{#1}}%
	\expandafter\expandafter\expandafter\theorem
	\csname reptheorem@#1\endcsname
	\endtheorem
	\endgroup
}

\NewEnviron{reptheorem}[1]{%
	\global\expandafter\xdef\csname reptheorem@#1\endcsname{%
		\unexpanded\expandafter{\BODY}%
	}%
	\expandafter\theorem\BODY\unskip\label{#1}\endtheorem
}

%%%%%%%% to restate lemmas
\newcommand{\repeatlemma}[1]{%
	\begingroup
	\renewcommand{\thelemma}{\ref{#1}}%
	\expandafter\expandafter\expandafter\lemma
	\csname replemma@#1\endcsname
	\endlemma
	\endgroup
}

\NewEnviron{replemma}[1]{%
	\global\expandafter\xdef\csname replemma@#1\endcsname{%
		\unexpanded\expandafter{\BODY}%
	}%
	\expandafter\lemma\BODY\unskip\label{#1}\endlemma
}

%%%%%%%% to restate corols
\newcommand{\repeatcorollary}[1]{%
	\begingroup
	\renewcommand{\thecorollary}{\ref{#1}}%
	\expandafter\expandafter\expandafter\corollary
	\csname repcorollary@#1\endcsname
	\endcorollary
	\endgroup
}

\NewEnviron{repcorollary}[1]{%
	\global\expandafter\xdef\csname repcorollary@#1\endcsname{%
		\unexpanded\expandafter{\BODY}%
	}%
	\expandafter\corollary\BODY\unskip\label{#1}\endcorollary
}

%%%%%%%% to restate propositions
\newcommand{\repeatproposition}[1]{%
	\begingroup
	\renewcommand{\theproposition}{\ref{#1}}%
	\expandafter\expandafter\expandafter\proposition
	\csname repproposition@#1\endcsname
	\endproposition
	\endgroup
}

\NewEnviron{repproposition}[1]{%
	\global\expandafter\xdef\csname repproposition@#1\endcsname{%
		\unexpanded\expandafter{\BODY}%
	}%
	\expandafter\proposition\BODY\unskip\label{#1}\endproposition
}

\begin{document}
	\title{Isomorphism Testing for $T$-graphs in FPT\thanks{This work was supported by the Czech Science Foundation, project no.~20-04567S.}}
	
	\author{Deniz A\u{g}ao\u{g}lu \c{C}a\u{g}{\i}r{\i}c{\i}\inst{1}\orcidID{0000-0002-1691-0434}%
		% 	\thanks{Corresponding author \email{agaoglu@mail.muni.cz.}}
		\and\\
		Petr Hlin\v en\'y\inst{2}\orcidID{0000-0003-2125-1514}}
	\authorrunning{D.~A\u{g}ao\u{g}lu~\c{C}a\u{g}{\i}r{\i}c{\i} and P.~Hlin\v en\'y}
	
	\institute{
		Masaryk University, Brno, Czech Republic
		\email{agaoglu@mail.muni.cz}\\
		\and
		Masaryk University, Brno, Czech Republic
		\email{hlineny@fi.muni.cz}}
	\maketitle              % typeset the header of the contribution
	\begin{abstract}
		A $T$-graph (a special case of a chordal graph) is the intersection graph of 
		connected subtrees of a suitable subdivision of a fixed tree~$T$. 
		We deal with the isomorphism problem for $T$-graphs which is 
		\emph{GI-complete} in general -- when $T$ is a part of the input and even a star.
		We prove that the $T$-graph isomorphism problem is in FPT when $T$ is the
		fixed parameter of the problem.
		This can equivalently be stated that isomorphism is in FPT for chordal
		graphs of (so-called) bounded leafage.
		While the recognition problem for $T$-graphs is not known to be in FPT
		wrt.~$T$, we do {\em not} need a $T$-representation to be given (a promise is enough).
		To obtain the result, we combine a suitable isomorphism-invariant
		decomposition of $T$-graphs with the classical tower-of-groups algorithm of
		Babai, and reuse some of the ideas of our isomorphism algorithm for
		$S_d$-graphs [MFCS~2020].
		
		\keywords{chordal graph \and $H$-graph \and leafage \and graph isomorphism \and parameterized complexity.}
	\end{abstract}

	\section{Introduction}
	\label{sec:intro}
	
	Two graphs $G$ and $H$ are called \emph{isomorphic}, denoted by $G \simeq
	H$, if there is a bijection $f:V(G) \rightarrow V(H)$ such that for every
	pair $u,v \in V(G)$, $\{u,v\} \in E(G)$ if and only if $\{f(u),f(v)\} \in
	E(H)$.  The well-known \emph{graph isomorphism problem} asks whether two
	input graphs are isomorphic, and it can be solved efficiently for various
	special graph classes
	\cite{AHU,planarLinear,recogIntervalLinear,hellyCARClinearISO,genus,DBLP:journals/networks/Colbourn81}. 
	On the other hand, it is still unknown whether this problem is
	polynomial-time solvable or not (though, it is not expected to be NP-hard) in the general case, 
	and a problem is said to be \emph{GI-complete} if it is polynomial-time equivalent to the
	graph isomorphism.
	
	We now briefly introduce two complexity classes of parameterized problems. 
	Let $k$ be the parameter, $n$ be the input size, $f$ and $g$ be two computable
	functions, and $c$ be some constant.  A decision problem is in the \emph{class FPT}
	(or {\em FPT-time}) if there exists an algorithm solving that problem correctly in
	time $\mathcal{O}(f(k)\cdot n^{c})$.  Similarly, a decision problem is
	in the \emph{class XP} if there exists an algorithm solving that problem correctly
	in time $\mathcal{O}(f(k)\cdot n^{g(k)})$.  Some parameters which
	yield to \emph{FPT}- or \emph{XP}-time algorithms for the graph isomorphism
	problem can be listed as tree-depth \cite{treedepth}, tree-width
	\cite{treewidth}, maximum degree \cite{degree} and genus \cite{genus}.  
	In this paper, we consider the parameterized complexity of the graph
	isomorphism problem for special instances of intersection graphs which we
	introduce next.
	
	The \emph{intersection graph} for a finite family of sets is an undirected
	graph~$G$ where each set is associated with a vertex of $G$, and each pair
	of vertices in $G$ are joined by an edge if and only if the corresponding
	sets have a non-empty intersection.  Chordal and interval graphs are two of
	the most well-known intersection graph classes related to our research.
	
	A graph is \emph{chordal} if every cycle of length more than three
	has a chord.  They are also defined as the intersection graphs of subtrees
	of some (non-fixed) tree~$T$~\cite{chordalityInters}.  Chordal graphs can be recognized
	in linear time, and they have linearly many maximal cliques which can be
	listed in polynomial time \cite{recogChordaLinear}.  Deciding the
	isomorphism of chordal graphs is a {\em GI-complete} problem~\cite{isoChordalGIComp}.
	A graph $G$ is an \emph{interval graph} if it is the intersection graph of
	a set of intervals on the real line. Interval graphs form a subclass of chordal
	graphs.  They can also be recognized in linear time, and interval graph
	isomorphism can be solved in linear time~\cite{recogIntervalLinear}.
	
	A subdivision of a graph $G$ is the operation of replacing selected edge(s)
	of~$G$ by new induced paths (informally, putting new vertices to the middle of an edge).
	For a fixed graph $H$, an {\em$H$-graph} is the intersection graph of
	connected subgraphs of a suitable subdivision of the graph~$H$~\cite{biro},
	and they generalize many types of intersection graphs.  For instance, interval graphs
	are $K_2$-graphs, their generalization called circular-arc graphs are $K_3$-graphs,
	and chordal graphs are the union of $T$-graphs where $T$ ranges over all trees.
	We, however, consider {\em$T$-graphs} where $T$ is a fixed tree.
	Even though chordal graphs can be recognized in
	linear time \cite{recogChordaLinear}, deciding whether a given chordal graph
	is a $T$-graph is NP-complete when $T$ is on the input~\cite{KLAVIK201585}.  
	In \cite{zemanWG}, Chaplick et al.\ gave an \emph{XP}-time algorithm to 
	recognize $T$-graphs parameterized by the size of $T$.
	
	$S_d$-graphs form a subclass of $T$-graphs where $S_d$ is the star with $d$
	rays.  The isomorphism problem for $S_d$-graphs, and therefore for
	$T$-graphs, was shown to be GI-complete \cite{isoChordalGIComp} with $d$ on the input.  
	In \cite{aaolu2019isomorphism}, we have proved by algebraic means that 
	$S_d$-graph isomorphism can be solved in \emph{FPT}-time parameterized by $d$, and then in
	\cite{SdT-graphs2021efficient} we have extended this approach to an
	\emph{XP}-time algorithm for the isomorphism problem of $T$-graphs 
	parameterized by the size of~$T$.  
	We have also considered in \cite{SdT-graphs2021efficient} the special case
	of isomorphism of proper $T$-graphs
	with a purely combinatorial \emph{FPT}-time algorithm.
	
	\paragraph{\textbf{New contribution.}}
	
	In this paper, we show that the graph isomorphism problem for $T$-graphs can be
	solved in \emph{FPT}-time parameterized by the size of $T$.
	Our algorithm does not assume or rely on $T$-representations of the input graphs to be
	given, and in fact it uses only some special properties of $T$-graphs.
	
	Moreover, our result can be equivalently reformulated as an \emph{FPT}-time
	algorithm for testing isomorphism of chordal graphs of bounded leafage,
	where the {\em leafage} of a chordal graph $G$ can be defined as the least
	number of leaves of a tree $T$ such that $G$ is a $T$-graph.
	Since there is only a bounded number of trees $T$ of a given number of
	leaves, modulo subdivisions, the correspondence of the two formulations is obvious.
	
	Highly informally explaining our approach (which is different
	from \cite{SdT-graphs2021efficient}), we~use chordality and
	properties of assumed $T$-representations of input graphs $G$ and~$G'$ 
	to efficiently compute their special hierarchical canonical
	decompositions into so-called fragments (Section~\ref{sec:decom}).
	Each fragment will be an interval graph, and the isomorphism problem of interval
	graphs is well understood.
	Then we use some classical group-computing tools
	(Section~\ref{sec:algebraic}, Babai's tower-of-groups approach) to
	compute possible ``isomorphisms'' between the decompositions of $G$ and of $G'$
	(Section~\ref{sec:mainalg}); each such isomorphism
	mapping between the fragments of the two decompositions, and
	simultaneously between the neighborhood sets of fragments in
	other fragments ``higher up'' in the decomposition.
	
	We remark that the same problem has been independently and
	concurrently solved by Arvind, Nedela, Ponomarenko and Zeman
	\cite{DBLP:journals/corr/abs-2107-10689},%
	\footnote{To be completely accurate, our paper was first time submitted to a
		conference at the beginning of July 2021, and
		\cite{DBLP:journals/corr/abs-2107-10689} appeared on arXiv just two weeks
		later, without mutual influence regarding the algorithms.}
	using different means (by reducing the problem to automorphisms of
	colored order-$3$ hypergraphs with bounded sizes of color classes).
	
	Statements marked with an asterisk ~\apxmark\ have proofs in the attached Appendix.
	% paper~\cite{A2021isomorphism} (an arXiv preprint).

	%%%%%%%%%%%%%%%%%%%%%%%%%%%%%%%%%%%%%%%%%%%%%%%%%%%%%%%%%%%%%%%%%%%%%%%%%%%%%%%%%%%%%%%%%%%%%%%%%%%%%%%%%%%%
	\section{Structure and decomposition of $T$-graphs}
	\label{sec:decom}
	
	In this section, we 
	give a procedure to ``extract'' a bounded number of special interval subgraphs 
	(called {\em fragments}) of a $T$-graph~$G$ in a way which is
	invariant under automorphisms and does not require a $T$-repre\-sentation on input.
	Informally, the fragments can be seen as suitable ``pieces'' of $G$ which 
	are placed on the leaves of $T$ in some representation,
	and their most important aspects are their simplicity and limited number.
	We use this extraction procedure repeatedly (and recursively) to obtain the full decomposition of a $T$-graph.

	\paragraph{\textbf{Structure of chordal graphs.}}
	
	We now give several useful terms and facts related to chordal graphs.
	A vertex $v$ of a graph $G$ is called \emph{simplicial} if its neighborhood
	corresponds to a clique of $G$.  
	It is known that every chordal graph
	contains a simplicial vertex and, by removing the simplicial vertices of a chordal graph repeatedly,
	one obtains an empty graph.
	
	A \emph{weighted clique graph} $\mathcal{C}_G$ of a graph $G$ is the graph
	whose vertices are the maximal cliques of $G$ and there is an edge
	between two vertices in $\mathcal{C}_G$ whenever the corresponding maximal
	cliques have a non-empty intersection.  The edges in $\mathcal{C}_G$ are
	weighted by the cardinality of the intersection of the corresponding~cliques.
	
	A \emph{clique tree} of $G$ is any maximum-weight spanning tree of $\mathcal{C}_G$
	which may not be unique.  An edge of $\mathcal{C}_G$ is called
	\emph{indispensable} (resp.\ \emph{unnecessary}) if it appears in
	every (resp.\ none) maximum-weight spanning tree of $\mathcal{C}_G$.
	If $G$ is chordal, every maximum-weight spanning tree $T$ of
	$\mathcal{C}_G$ is also a $T$-representation of $G$, e.g.~\cite{MATSUI20103635}.

	For a graph $G$ and two vertices $u \neq v \in V(G)$, a subset $S \subseteq
	V(G)$ is called a \emph{u-v separator} (or \emph{u-v cut}) of $G$ if $u$ and
	$v$ belong to different components of $G - S$.  When $\vert S \vert = 1$,
	then $S$ is called a \emph{cutvertex}.  $S$ is called \emph{minimal} if no
	proper subset of $S$ is a $u$-$v$ separator.  Minimal separators of a
	graph are the separators which are minimal for some pair of vertices. 
	Chordal graphs, thus $T$-graphs, have linearly many minimal vertex
	separators \cite{linearlyManySeparatorOfChordal}.
	
	A \emph{leaf clique} of a $T$-graph $G$ is a maximal clique of $G$ which can
	be a leaf of some clique tree of $G$ (informally, it can be placed on a leaf 
	of $T$ in some $T$-representation of $G$). 
	We use the following lemma in our algorithm:
	
	\begin{lemma}[Matsui et al.~\cite{MATSUI20103635}]\label{leafCondition}
		A maximal clique $C$ of a chordal graph $G$ can be a leaf of a
		clique tree if and only if $C$ satisfies (1) $C$ is incident to at most one
		indispensable edge of $\mathcal{C}_G$, and (2) $C$ is not a cutvertex in
		$\mathcal{C}_G'$ which is the subgraph of $\mathcal{C}_G$ which includes all
		edges except the unnecessary ones.
		The conditions can be checked in polynomial time.
	\end{lemma}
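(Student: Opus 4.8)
The plan is to work throughout with the fact stated above that, for chordal $G$, the clique trees are exactly the maximum-weight spanning trees (MWSTs) of $\mathcal{C}_G$, so that ``$C$ can be a leaf of a clique tree'' becomes ``some MWST of $\mathcal{C}_G$ has $\deg(C)=1$''. I will lean on the two standard facts behind the notions of indispensable and unnecessary edges: running Kruskal on $\mathcal{C}_G$ in decreasing weight order, an edge $e$ of weight $w$ is unnecessary exactly when its endpoints are already joined by edges of weight $>w$, and indispensable exactly when, after contracting all heavier edges, $e$ is a bridge among the weight-$w$ edges; consequently every MWST lies inside $\mathcal{C}_G'$ and contains every indispensable edge. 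I will also use that a vertex $v$ is a leaf of some spanning tree of a connected graph $H$ iff $H-v$ is connected, and that in any clique tree the maximal cliques containing a fixed vertex set $Q$ induce a subtree, so every edge on the path between two such cliques has weight $\ge|Q|$.

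For the forward direction I would argue directly. If $C$ is a leaf of a clique tree $T$, then $T$ contains the single $T$-edge at $C$ together with all indispensable edges, so at most one indispensable edge meets $C$, which is~(1). Moreover $T$ uses no unnecessary edge, so $T$ is also a spanning tree of $\mathcal{C}_G'$; deleting the leaf $C$ keeps $T-C$ connected and spanning $V(\mathcal{C}_G')\setminus\{C\}$, so $\mathcal{C}_G'-C$ is connected and $C$ is not a cut-vertex of $\mathcal{C}_G'$, which is~(2).

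For the backward direction I would, assuming (1) and (2), build an MWST with $\deg(C)=1$ level by level along Kruskal. Write $w_1>w_2>\dots$ for the distinct edge weights, let $G_i$ be obtained by contracting all edges of weight $>w_i$ and keeping the weight-$w_i$ edges, and recall an MWST is a union of spanning forests $F_i$ of the $G_i$. Let $i_0$ be the weight of the heaviest edge at $C$; since nothing heavier meets $C$, the image of $C$ in $G_{i_0}$ is $C$ itself, it is non-isolated, so $\deg_{F_{i_0}}(C)\ge1$ is forced, and every $G_{i_0}$-edge at $C$ lifts to a $\mathcal{C}_G$-edge at $C$. Hence the assembled tree has $\deg(C)\ge1$ always, with equality exactly when $\deg_{F_{i_0}}(C)=1$ and each $F_i$ with $i>i_0$ avoids every $G_i$-edge at the image of $C$ that lifts to a $C$-incident edge. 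I would then show (1)--(2) make both choices possible. If $C$ were a cut-vertex of its component $K$ of $G_{i_0}$, a side of $K$ joined to $C$ by a single edge makes that edge a bridge of $G_{i_0}$, hence indispensable, so by~(1) all but at most one side is joined to $C$ by several edges; and one shows that in $\mathcal{C}_G'$ the sides of $K$ at $C$ --- and more generally $K$ and the other components of $G_{i_0}$ --- are separated by $C$, since any $\mathcal{C}_G'$-edge joining two of these pieces would have weight $<w_{i_0}$ yet have its endpoints linked by a weight-$\ge w_{i_0}$ path (the common-separator subtree fact, both endpoints lying in the subtree spanned by $K$ in a clique tree), hence be unnecessary --- a contradiction; so $\mathcal{C}_G'-C$ would be disconnected, against~(2). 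The analogous dichotomy one level at a time above $i_0$ (a forced $C$-incident edge at level $i>i_0$ is either a bridge, giving a second indispensable edge at $C$ against~(1), or it forces a $C$-separation of $\mathcal{C}_G'$ against~(2)) lets me avoid all $C$-edges above level $i_0$, finishing the construction.

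The polynomial-time claim is routine: $\mathcal{C}_G$ and its indispensable and unnecessary edges are computable in polynomial time, after which (1)--(2) amount to incidence counting and one cut-vertex test. I expect the backward direction to be the main obstacle: conditions (1) and (2) each concern a single mixed-weight object ($\mathcal{C}_G$, resp.\ $\mathcal{C}_G'$), whereas the degree constraint is intrinsically level-by-level, so the real work is the bookkeeping that converts an unavoidable $\deg(C)\ge2$ into a violation of (1) or (2); the delicate step is ruling out reconnection of the pieces around $C$ by low-weight edges of $\mathcal{C}_G'$, which is exactly what the common-separator subtree property forbids.
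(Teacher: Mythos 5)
First, note that the paper does not prove this statement at all: it is imported verbatim from Matsui, Uehara and Uno \cite{MATSUI20103635}, so there is no in-paper argument to compare yours against; I can only judge your proposal on its own terms. Your forward direction is correct and complete (every clique tree is an MWST, hence contains all indispensable and no unnecessary edges, so a leaf $C$ meets at most one indispensable edge and $T-C$ witnesses connectivity of $\mathcal{C}_G'-C$), and the polynomial-time claim is routine as you say. The Kruskal level decomposition of the set of MWSTs is also the right skeleton for the converse, and it correctly reduces the problem to ``the per-level minimum degree of $C$ sums to $1$,'' since the spanning-forest choices at distinct weight levels are independent.

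The genuine gap is exactly at the step you yourself flag as delicate, and the argument you offer there does not close it. To show that a cut-vertex $C$ of its component $K$ of $G_{i_0}$ violates (2), you must show the sides of $K-C$ lie in \emph{different components of $\mathcal{C}_G'-C$}; your weight argument only excludes a single $\mathcal{C}_G'$-edge joining two sides directly (both endpoints inside $K$, hence linked by a heavier path, hence the edge is unnecessary). It does not exclude a path that leaves $K$ through a light non-unnecessary edge, wanders among cliques outside $K$, and re-enters $K$ on a different side. Your attempted patch --- that ``$K$ and the other components of $G_{i_0}$ are separated by $C$'' in $\mathcal{C}_G'$ --- is false as stated: $\mathcal{C}_G'$ contains every clique tree, so the other components of $G_{i_0}$ are connected to $K$ in $\mathcal{C}_G'$, and generically not through $C$. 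Ruling out such detours is where the real content of the lemma sits; it needs the clique-tree structure itself (e.g.\ that in every clique tree the cliques of $K$ induce a subtree of heavy edges in which $C$ is a cut-vertex, combined with the fact that $\mathcal{C}_G'$ is the union of all clique trees and an edge-swap argument), not just edge-weight comparisons. The same objection applies, amplified, to the levels $i>i_0$, which you dismiss as ``analogous'': there $C$ sits inside a contracted blob containing other cliques, a set of weight-$w_i$ edges at $C$ can be collectively unavoidable without any single one being a bridge, and an unavoidable separation could a priori be effected by the whole blob rather than by $C$ alone; moreover your dichotomy must also track that the one indispensable edge permitted by (1) can only be tolerated if it is the level-$i_0$ edge, which your sketch does not do. Until the ``reconnection through light edges outside $K$'' case is handled, the backward direction is not proved.
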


	\paragraph{\textbf{Decomposing $T$-graphs.}}
	
	The overall goal now is to recursively find a unique decomposition of a given $T$-graph $G$ 
	into levels such that each level consists of a bounded number of interval fragments.
	
	For an illustration, a similar decomposition can be obtained directly from a
	$T$-representation of~$G$:
	pick the interval subgraphs of $G$ which are represented exclusively on the leaf
	edges of $T$, forming the outermost level, and recursively in the same way obtain
	the next levels.
	Unfortunately, this is not a suitable solution for us, not only that we do not
	have a $T$-decomposition at hand, but mainly because we need our
	decomposition to be {\em canonical}, meaning invariant under automorphisms of the
	graph, while this depends on a particular representation.
	
	The contribution of this section is to compute such a decomposition the right canonical way.
	As sketched above, the core task is to canonically determine in the given graph $G$ one
	bounded-size collection of fragments which will form the outermost level of the
	decomposition, and then the rest of the decomposition is obtained in the same way from
	recursively computed collections of fragments in the rest of the graph, which is also a $T$-graph%
	\footnote{Since the requirement of canonicity of our collection does not
		allow us to relate this collection to a particular $T$-representation of
		$G$, we cannot say whether the rest of $G$ (after removing our collection of
		fragments) would be a $T_1$-graph for some strict subtree $T_1\subsetneq T$, 
		or only a $T$-graph again. That is why we speak about $T$-graphs for the
		same $T$ (or, we could say graphs of bounded leafage here)
		throughout the whole recursion.
		In particular, we cannot directly use this procedure to recognize $T$-graphs.
	}.
	
	\smallskip
	
	For a chordal graph $G$ and a (fixed) collection $Z_1,Z_2,\ldots,Z_s\subseteq G$
	of distinct cliques, we write $Z_i\preceq Z_j$ if there exists
	$k\in\{1,\dots,s\}\setminus\{i,j\}$ such that $Z_j$ {\em separates} $Z_i$
	from $Z_k$ in $G$ (meaning that there is no path from $Z_i\setminus Z_j$ to
	$Z_k\setminus Z_j$ in $G-Z_j$), and say that $Z_i\preceq Z_j$ is witnessed by~$Z_k$.
	Note that $\preceq$ is transitive, and hence a preorder.
	Let $Z_i\precneqq Z_j$ mean that $Z_i\preceq Z_j$ but $Z_j\not\preceq Z_i$.
	We also write $Z_i\approx Z_j$ if there exists
	$k\in\{1,\dots,s\}\setminus\{i,j\}$ such that both $Z_i\preceq Z_j$ and
	$Z_j\preceq Z_i$ hold and are witnessed by~$Z_k$.
	Note that $Z_j\approx Z_i$ is stronger than just saying `$Z_i\preceq Z_j$ and     
	$Z_j\preceq Z_i$,' and that $Z_i\cap Z_j$ then separates $Z_i\Delta Z_j$
	from~$Z_k$.
	
	\begin{replemma}{cliqdd}\apxmark\label{lem:cliqdd}
		Let $T$ be a tree with $d$ leaves, and $G$ be a $T$-graph.
		Assume that $Z_1,\ldots,Z_s\subseteq G$ are distinct cliques of $G$ such that
		one of the following holds:
		\\a) for each $1\leq i\leq s$, the set $Z_i$ is a maximal clique in $G$,
		and for any $1\leq i\not=j\leq s$, neither $Z_i\precneqq Z_j$ nor
		$Z_i\approx Z_j$ is true, or
		\\b) for each $1\leq i\leq s$, the set $Z_i$ is a minimal separator in $G$
		cutting off a component $F$ of $G-Z_i$ such that $F$ contains a simplicial
		vertex of a leaf clique of~$G$, and that $F$ is disjoint from all $Z_j$, $j\not=i$.
		\\Then $s\leq d$.
	\end{replemma}
	
	Let $Z\subseteq G$ be a minimal separator in~$G$
	and $F\subseteq G$ a connected component of $G-Z$.
	Then $Z$ is a clique since $G$ is chordal, and whole $Z$ is in the
	neighborhood of~$F$ by minimality.
	We call a {\em completion of $F$} (in implicit~$G$) the graph $F^+$ obtained
	by contracting all vertices of $G$ not in $V(F)\cup Z$ into one vertex $l$
	(the neighborhood of $l$ is thus~$Z$) and joining $l$ with a new
	leaf vertex~$l'$, called the {\em tail of~$F^+$}.
	Since $F$ determines~$Z$ in a chordal graph~$G$, the term $F^+$ is well defined.
	
	We call a collection of disjoint nonempty induced 
	subgraphs (not necessarily connected) $X_1,X_2,\ldots,X_s\subseteq G$, 
	such that there are no edges between distinct $X_i$ and $X_j$, 
	a {\em fragment collection} of $G$ of size~$s$.
	We first give our procedure for computing a fragment collection, 
	and subsequently formulate (and prove) the crucial properties
	of the computed collection and the whole decomposition.
	
	\begin{proc}\label{proc:extract} \rm
		Let $T$ be a tree with $d$ leaves and no degree-$2$ vertex.
		Assume a $T$-graph $G$ on the input.
		We compute an induced (and canonical) fragment collection 
		$X_1,X_2,\ldots,X_s\subseteq G$ of $G$ of size $0<s\leq2d$ as follows:
		\begin{enumerate}
			\item 
			List all maximal cliques in $G$ (using a simplicial decomposition)
			and compute the weighted clique graph $\mathcal{C}_G$ of $G$. 
			Compute the list $\ca L$ of all possible leaf cliques of $G$ by
			Lemma~\ref{leafCondition}; in more detail, using
			\cite[Algorithm~2]{MATSUI20103635} for computation of the indispensable edges in $\mathcal{C}_G$.
			\label{it:stepfirst}
			\item 
			For every pair $L_1,L_2\in\ca L$ such that $L_1\precneqq L_2$, remove $L_2$ from the list.
			Let $\ca L_0\subseteq\ca L$ be the resulting list of cliques,
			which is nonempty since $\precneqq$ is acyclic.
			\label{it:nonsepL}
			\item 
			Let $\ca L_1:=\big\{L\in\ca L_0: \forall L'\in\ca L_0\setminus\{L\}.~ L\not\approx L'\big\}$
			be the subcollection of cliques incomparable with others in~$\approx$.
			By Lemma~\ref{lem:cliqdd}(a) we have $|\ca L_1|\leq d$.
			If $\ca L_1\not=\emptyset$, then {\bf output} the following fragment collection of~$G$:
			for each $L\in\ca L_1$, include in it the set $F\subseteq L$ of all
			simplicial vertices of $L$ in the graph~$G$.
			\label{it:outsimplic}%
			\item \smallskip
			Now, for each $L\in\ca L_0$ we have $L'\in\ca L_0\setminus\{L\}$
			such that $L\approx L'$ (and so $L\cap L'$ is a separator in~$G$).
			For distinct $L_1,L_2\in\ca L_0$ such that $L_1\approx L_2$,
			we call a set $Z\subseteq L_1\cap L_2$ a {\em joint separator for $L_1,L_2$}
			if $Z$ separates $L_1\Delta L_2$ from $L\setminus Z$ for some (any)
			$L\in\ca L_0\setminus\{L_1,L_2\}$.
			We compute the family $\ca Z$ of all inclusion-minimal sets $Z$ which are
			joint separators for some pair $L_1\approx L_2\in\ca L_0$ as above, over all
			such pairs~$L_1,L_2$.
			This is efficient since all minimal separators in chordal graphs can be
			listed in linear time.
			Note that no set $Z\in\ca Z$ contains any simplicial vertex of~$G$, 
			and so $V(G)\not\subseteq\bigcup\ca Z$.
			\label{it:nonsepZ}
			\item 
			Let $\ca C$ be the family of the connected components of $G-\bigcup\ca Z$,
			and $\ca C_0\subseteq\ca C$ consist of such $F\in\ca C$ that $F$ is incident 
			to just one set $Z_F\in\ca Z$.
			Note that $\ca C_0\not=\emptyset$, since otherwise the incidence graph between
			$\ca C$ and $\ca Z$ would have a cycle and this would in turn contradict chordality of~$G$.
			Let $\ca Z_0:=\{Z_F\in\ca Z: F\in\ca C_0\}$.
			Moreover, by Lemma~\ref{lem:cliqdd}(b), $|\ca Z_0|\leq d$.
			\item \smallskip
			We make a collection $\ca C_0'$ from $\ca C_0$ by the following
			operation: for each $Z\in\ca Z_0$, take all $F\in\ca C_0$ such that $Z_F=Z$
			and every vertex of $F$ is adjacent to whole~$Z$, and join them into one
			graph in $\ca C_0'$ (note that there can be arbitrarily many such $F$ for
			one~$Z$).
			Remaining graphs of $\ca C_0$ stay in $\ca C_0'$ without change.
			Then, we denote by $\ca C_1\subseteq\ca C_0'$ the~subcollection of those
			$F\in\ca C_0'$ such that the completion $F^+$~of~$F$ (in $G$) is an interval graph.%
			\footnote{Informally, $F^+\in\ca C_1$ iff $F$ has an interval representation
				(on a horizontal line) to which its separator $Z_F$ can be ``attached from the left'' on the same line.}
			\label{it:Zseparators}
			\item
			If $\ca C_1\not=\emptyset$, then {\bf output} $\ca C_1$ as the fragment collection.
			(As we can show from Lemma~\ref{lem:cliqdd}(b), $|\ca C_1|\leq d+|\ca Z_0|\leq2d$.)
			\label{it:C1fragment}
			\item
			Otherwise, for each graph $F\in\ca C_0'$, we call this procedure recursively
			on the completion $F^+$ of $F$ (these calls are independent since the
			graphs in $\ca C_0'$ are pairwise disjoint).
			Among the fragments returned by this call, we keep only those which are subgraphs of~$F$.%
			\footnote{Note that, e.g., the separator and tail of $F^+$ may also be involved in a
				recursively computed fragment.}
			We {\bf output} the fragment collection formed by the
			union of kept fragments from~all~recursive~calls.
			\label{it:recurfrag}
		\end{enumerate}
	\end{proc}
	
	One call to Procedure~\ref{proc:extract} clearly takes only polynomial time
	(in some steps this depends on $G$ being chordal -- e.g., listing all cliques
	or separators).
	Since the possible recursive calls in the procedure are applied to pairwise disjoint
	parts of the graph (except the negligible completion of $F$ to $F^+$),
	the overall computation of Procedure~\ref{proc:extract} takes polynomial time
	regardless of~$d$.
	Regarding correctness, we are proving that $s\leq2d$, which is in the
	respective Steps~\ref{it:outsimplic} and~\ref{it:C1fragment} indicated as a corollary of Lemma~\ref{lem:cliqdd},
	except in the last (recursive) Step~\ref{it:recurfrag} where it
	can be derived in a similar way from Lemma~\ref{lem:cliqdd} applied to the final collection.
	We leave the remaining technical details for the attached Appendix.
	% paper~\cite{A2021isomorphism}.
	
	The last part is to prove a crucial fact that the collection $X_1,X_2,\ldots,X_s\subseteq G$
	is indeed canonical, which is precisely stated as follows:
	\begin{replemma}{reallycanonical}\apxmark\label{lem:reallycanonical}
		Let $G$ and $G'$ be isomorphic $T$-graphs.
		If Procedure~\ref{proc:extract} computes the canonical collection
		$X_1,\ldots,X_s$ for $G$ and the canonical collection
		$X_1',\ldots,X_{s'}'$ for $G'$, then $s=s'$ and there is an
		isomorphism between $G$ and $G'$ matching in some order
		$X_1,\ldots,X_s$ to $X_1',\ldots,X_{s}'$.
	\end{replemma}

	\paragraph{\textbf{Levels, attachments and terminal sets.}}
	
	Following Procedure~\ref{proc:extract}, we now show how the full
	decomposition of a $T$-graph $G$ is completed.
	
	For every fragment $X$ of the canonical collection computed by               
	Procedure~\ref{proc:extract}, we define the list of 
	{\em attachment sets} of $X$ in $G-X$ as follows.
	If $X=F$ is obtained in Step \ref{it:outsimplic}, then it has one attachment set $L\setminus F$.
	Otherwise (Steps \ref{it:Zseparators}~and~\ref{it:C1fragment}\,), 
	the attachment sets of $X=F$ are all subsets $A$ of the corresponding
	separator $Z$ (of~$F$) such that some vertex of $X$ has the neighborhood in $Z$ equal to~$A$.
	Observe that the attachment sets of $X$ are always cliques contained in the
	completion $X^+$, as defined above.
	Moreover, it is important that the attachment sets of $X$ form a chain by the
	set inclusion, since $G$ is chordal, and hence they are {\em uniquely determined}
	independently of automorphisms of~$X^+$.

	\begin{proc}\label{proc:decomprec}\rm
		Given a $T$-graph $G$, we determine a {\bf canonical decomposition}
		of $G$ recursively as follows. Start with $i=1$ and~$G_0:=G$.
		\begin{enumerate}
			\item 
			Run Procedure~\ref{proc:extract} for $G_{i-1}$, obtaining the collection $X_1,\ldots,X_s$.
			\item 
			We call the special interval subgraphs $X_1,\ldots,X_s$ {\em fragments} and their family
			$\ca X_i:=\{X_1,\ldots,X_s\}$ a {\em level} (of~number~$i$) of the constructed decomposition.
			\item 
			Let $G_{i}:=G-\big(V(X_1)\cup\ldots\cup V(X_s)\big)$.
			Mark every attachment set of each $X_j$ in $G_{i}$ as a {\em terminal set}.
			These terminal sets will be further refined when recursively decomposing
			$G_i$; namely, further constructed fragments of $G_i$ will inherit
			induced subsets of marked terminal sets as their terminal sets.
			\item 
			As long as $G_i$ is not an interval graph, repeat this from Step~1 with
			$i\leftarrow i+1$.
		\end{enumerate}
	\end{proc}
	
	Regarding this procedure, we stress that the obtained levels are numbered
	``from outside'', meaning that the first (outermost) level is of the least index.
	The rule is that fragments from lower levels have their attachment sets
	as terminal sets in higher levels.
	As it will be made precise in the next section, an isomorphism between two
	$T$-graphs can be captured by a mapping between their canonical decompositions,
	which relates pairwise isomorphic fragments and preserves the incidence (i.e., identity)
	between the attachment sets of mapped fragments and the terminal sets of
	fragments in higher levels.
	See also Figure~\ref{fig:terminalAut}.

	%%%%%%%%%%%%%%%%%%%%%%%%%%%%%%%%%%%%%%%%%%%%%%%%%%%%%%%%%%%%%%%%%%%%%%%%%%%%%%%%%%%%%%%%%%%%%%%%%%%%%%%%%%%%
	\section{Group-computing tools}
	\label{sec:algebraic}
	
	We first recall the notion of the \emph{automorphism group} which is closely
	related to the graph isomorphism problem.  An \emph{automorphism} is an
	isomorphism of a graph $G$ to itself, and the \emph{automorphism group} of
	$G$ is the group $Aut(G)$ of all automorphisms of $G$.  
	There exists an isomorphism
	from $G_1$ to $G_2$ if and only if the automorphism group of the disjoint union
	$H:= G_1 \uplus G_2$ contains a permutation exchanging the vertex sets of $G_1$ and $G_2$.  
	We work with automorphism groups by means of their generators;
	a subset $A$ of elements of a group $\Gamma$ is called a {\em set of
		generators} if the members of $A$ together with the operation of $\Gamma$
	can generate each element~of~$\Gamma$.
	
	There are two related classical algebraic tools which we shall use in the next section.
	The first one is an algorithm
	%, called the ``sift table'', 
	performing computation of a subgroup of an arbitrary group, provided that we
	can efficiently test the membership in the subgroup and the subgroup is not
	``much smaller'' than the original group:
	
	\begin{theorem}{\bf(Furst, Hopcroft and Luks
			\cite[Cor.~1]{furst})}\label{thm:furstgen}
		Let $\Pi$ be a permutation group given by its generators, 
		and $\Pi_1$ be any subgroup of $\Pi$ such that one can test in
		polynomial time whether $\pi\in\Pi_1$ for any $\pi\in\Pi$ (membership test).
		If the ratio $|\Pi|/|\Pi_1|$ is bounded by a function of a parameter $d$,
		then a set of generators of $\Pi_1$ can be computed in
		\emph{FPT}-time (with respect to~$d$).
	\end{theorem}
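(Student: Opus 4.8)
The plan is to prove this by the classical \emph{Schreier generators} technique (the combinatorial core of the Schreier--Sims machinery). The point of the hypothesis $|\Pi|/|\Pi_1|\le f(d)$ is that $\Pi_1$ has only few right cosets in $\Pi$, so we can afford to list all of them explicitly, each with a chosen representative; a standard finite lemma then turns such a transversal together with the \emph{given} generators of $\Pi$ into a generating set of $\Pi_1$, using only the membership oracle.

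First I would compute a right transversal $U$ of $\Pi_1$ in $\Pi$, i.e.\ a set containing exactly one element of each coset $\Pi_1 g$, with $1\in U$. This is an orbit/reachability computation: $\Pi$ acts transitively on the coset space $\Pi_1\backslash\Pi$ by right multiplication, so a breadth-first search through the given generator set $S$ of $\Pi$, starting from the coset $\Pi_1$ represented by the identity, discovers all cosets. Concretely, maintain the list $U$ of representatives found so far; for each $u\in U$ and each $s\in S$ form $us$ and decide which already-discovered coset it lies in --- this uses only the membership test, since $us$ is in the coset of $u'\in U$ exactly when $us(u')^{-1}\in\Pi_1$ --- and if it lies in none, append $us$ to $U$ as a new representative. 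The search stabilises once $|U|=|\Pi|/|\Pi_1|\le f(d)$ representatives have been found. The total cost is at most $|U|\cdot|S|$ iterations, each performing at most $|U|$ membership tests plus a constant number of group operations; since each membership test runs in polynomial time and $|S|$ is polynomial in the input, this is FPT-time in $d$. (As a by-product, the search records, for every $u\in U$ and $s\in S$, the representative $\overline{us}\in U$ of the coset $\Pi_1(us)$.)

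Given $U$, I would output the set of \emph{Schreier generators} $\{\,us\,\overline{us}^{\,-1}: u\in U,\ s\in S\,\}$, where $\overline{g}\in U$ denotes the representative of $\Pi_1 g$. By Schreier's lemma this set generates $\Pi_1$: each $us\,\overline{us}^{\,-1}$ lies in $\Pi_1$ because $us$ and $\overline{us}$ are in the same coset, and the usual short rewriting argument --- take any word over $S\cup S^{-1}$ evaluating into $\Pi_1$ and repeatedly push coset representatives rightwards --- shows every element of $\Pi_1$ is a product of such generators. Their number is $|U|\cdot|S|\le f(d)\cdot|S|$, which is FPT-bounded in $d$, and each is obtained from data already computed, so the whole procedure runs in FPT-time.

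The only delicate point is bookkeeping around the coset action: one must fix the left/right convention consistently so that the membership oracle is applied to the correct element ($us(u')^{-1}$, not $(u')^{-1}us$), and one must note that the search terminates exactly after $|\Pi:\Pi_1|$ cosets --- no coset is missed, precisely because $\Pi$ acts transitively on $\Pi_1\backslash\Pi$. A minor caveat is that the returned generating set has size $f(d)\cdot|S|$ rather than size bounded purely by $f(d)$; if a bounded-size set were wanted one could afterwards sift these down to $O(\log|\Pi_1|)$ generators, but for the statement as given (and for our later use) the Schreier generators already suffice.
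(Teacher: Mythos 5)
Your proof is correct and complete: the coset-space BFS driven by the membership oracle, followed by Schreier's lemma, is exactly the argument behind the cited result of Furst, Hopcroft and Luks, and all the details (the right-coset convention, the $|U|\le f(d)$ bound on the transversal, the termination and cost analysis) check out. The paper itself gives no proof of this theorem --- it is imported verbatim from \cite{furst} --- so there is nothing further to compare; your reconstruction matches the standard proof of that source.
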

	
	The second tool, known as Babai's ``tower-of-groups'' procedure
	(cf.~\cite{babai-bdcm}), will not be used as a standalone statement, but as
	a mean of approaching the task of computation of the automorphism group of
	our object $H$ (e.g., graph).
	This procedure can be briefly outlined as follows; imagine an
	inclusion-ordered chain~of~groups
	$\Gamma_0\supseteq\Gamma_1\supseteq\ldots\supseteq\Gamma_{k-1}\supseteq\Gamma_{k}$
	such that
	\begin{itemize}
		\item $\Gamma_0$ is a group of some unrestricted permutations on the
		ground set of our~$H$,
		\item for each $i\in\{1,\ldots,k\}$, we ``add'' some further restriction
		(based on the structure of~$H$) which has to be satisfied by all
		permutations of $\Gamma_i$, 
		\item the restriction in the previous point is chosen such that the
		ratio $|\Gamma_{i-1}|/|\Gamma_i|$ is guaranteed to be ``small'', and
		\item in $\Gamma_k$, we get the automorphism group of our object~$H$.
	\end{itemize}
	Then Theorem~\ref{thm:furstgen} can be used to compute $\Gamma_1$ from
	$\Gamma_0$, then $\Gamma_2$ from $\Gamma_1$, and so on until we get the
	automorphism group $\Gamma_k$.

	\paragraph{\textbf{Automorphism group of a decomposition.}}
	
	Here we are going to apply the above procedure in order to compute the
	automorphism group of a special object which combines the decompositions
	(cf.~Procedure~\ref{proc:decomprec}) of given $T$-graphs $G_1$ and $G_2$, but
	abstracts from precise structure of the fragments as interval graphs.
	
	Consider canonical decompositions of the graphs $G_1$ and $G_2$, as produced
	by Procedure~\ref{proc:decomprec} in the form of level families
	$\ca X^1_1,\ldots,\ca X^1_\ell$ and
	$\ca X^2_1,\ldots,\ca X^2_{\ell'}$, respectively.
	We may assume that $\ell=\ell'$ since otherwise we immediately answer `not isomorphic'.
	A combined decomposition of $H= G_1 \uplus G_2$ hence consists of the
	levels $\ca X_i:=\ca X^1_i\cup\ca X^2_i$ for $i=1,\ldots,\ell$
	and their respective terminal sets.
	More precisely, let $\ca X:=\ca X_1\cup\ldots\cup\ca X_\ell$.
	Let $\ca A[X]$ for $X\in\ca X_k$ be the family of all terminal sets in~$X$
	(as marked by Procedure~\ref{proc:decomprec} and then restricted to~$V(X)$\,),
	and specially $\ca A^i[X]\subseteq\ca A[X]$ be those terminal sets in~$X$
	which come from attachment sets of fragments on level~$i<k$.
	Let $\ca A_k:=\bigcup_{X\in\ca X_k}\ca A[X]$ and
	$\ca A_k^i:=\bigcup_{X\in\ca X_k}\ca A^i[X]$ for $k=1,\ldots,\ell$,
	and let $\ca A:=\ca A_1\cup\ldots\cup\ca A_\ell$.
	
	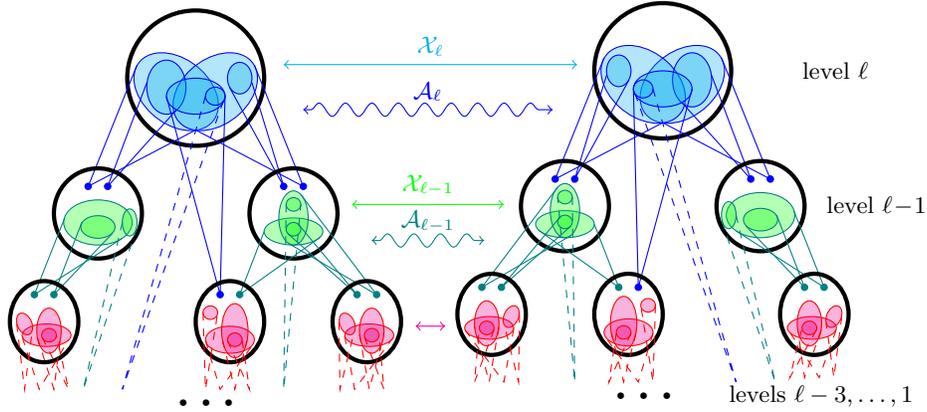
\begin{figure}[tb]
		\centering
		\begin{subfigure}[t]{0.47\linewidth}
			\centering
			\begin{tikzpicture}[xscale=0.65,yscale=0.6]
				
				%%%%%%%%%% lines %%%%%%%%%%
				\draw[<->,cyan] (-0.7,5.3) -- (5.3,5.3) node[midway,above]{$\ca X_\ell$};
				
				\draw[<->,blue,snake=snake,line before snake=1mm,line after snake=1mm] (-0.3,4.3) -- (4.8,4.3) node[midway, above]{$\ca A_\ell$};
				
				\draw[<->,green] (0.7,2.2) -- (3.8,2.2) node[midway, above]{$\ca X_{\ell-1}$};
				
				\draw[<->,teal,snake=snake,line before snake=1mm,line after snake=1mm] (1.1,1.4) -- (3.4,1.4) node[midway, above]{$\ca A_{\ell-1}$};
				
				\draw[<->,magenta] (2,-0.5) -- (2.6,-0.5) node[midway, above]{};
				
				%%%%%%%%%% cyan %%%%%%%%%%
				\draw[rotate around={90:(-2.5,5)}, color=black, ultra thick, opacity=1] (-2.5,5) ellipse (1.5cm and 1.4cm); % b
				
				%%% left big
				\draw[rotate around={135:(-2.9,4.7)}, color=blue, fill=cyan, fill opacity=0.3] (-2.9,4.7) ellipse (1cm and 0.7cm);
				\draw[-,blue] (-3.7,5.29) -- (-4.7,2.6) -- (-2.5,3.85) node[midway, above]{};
				\node at (-4.7,2.6) [circle,draw, fill=blue, opacity=1, color=blue, inner sep=0.3mm] (a) {};
				
				%%% right big
				\draw[rotate around={45:(-2.1,4.7)}, color=blue, fill=cyan, fill opacity=0.3] (-2.1,4.7) ellipse (1cm and 0.7cm); 
				\draw[-,blue] (-1.3,5.29) -- (-0.3,2.6) -- (-2.5,3.85) node[midway, above]{};
				\node at (-0.3,2.6) [circle,draw, fill=blue, opacity=1, color=blue, inner sep=0.3mm] (b) {};
				
				%%% left small
				\draw[rotate around={90:(-3.1,4.8)}, color=blue, fill=cyan, fill opacity=0.3] (-3.1,4.8) ellipse (0.6cm and 0.4cm); 
				\draw[-,blue] (-3.5,4.9) -- (-4.3,2.6) -- (-2.89,4.3) node[midway, above]{};
				\node at (-4.3,2.6) [circle,draw, fill=blue, opacity=1, color=blue, inner sep=0.3mm] (c) {};		
				
				%%% middle
				\draw[rotate around={0:(-2.5,4.6)}, color=blue, fill=cyan, fill opacity=0.3] (-2.5,4.6) ellipse (0.6cm and 0.4cm); 
				\draw[-,blue] (-3.1,4.6) -- (-2,0.2) -- (-1.9,4.6) node[midway, above]{};
				\node at (-2,0.2) [circle,draw, fill=blue, opacity=1, color=blue, inner sep=0.3mm] (i) {};
				
				%%% above
				\draw[rotate around={0:(-1.79,5.9)}, color=blue, fill=cyan, fill opacity=0.3] (-2.1,4.6) ellipse (0.2cm and 0.2cm); 
				\draw[-,blue,dashed] (-2.3,4.6) -- (-4,-1.9) -- (-1.9,4.6) node[midway, above]{};
				
				%%% right small
				\draw[rotate around={90:(-1.6,5)}, color=blue, fill=cyan, fill opacity=0.3] (-1.6,5) ellipse (0.35cm and 0.25cm); 
				\draw[-,blue] (-1.81,4.8) -- (-0.7,2.6) -- (-1.35,4.9) node[midway, above]{};
				\node at (-0.7,2.6) [circle,draw, fill=blue, opacity=1, color=blue, inner sep=0.3mm] (c) {};

				%%%%%%%%%% green1 %%%%%%%%%%
				\draw[rotate around={90:(-4.5,2)}, color=black, ultra thick, opacity=1] (-4.5,2) ellipse (1cm and 0.9cm); 
				
				%%% big
				\draw[rotate around={0:(-4.5,1.8)}, color=teal, fill=green, fill opacity=0.3] (-4.5,1.8) ellipse (0.7cm and 0.5cm); 	
				\draw[-,teal] (-5.18,1.9) -- (-5.8,0.2) -- (-4,1.45) node[midway, above]{};
				\node at (-5.8,0.2) [circle,draw, fill=teal, opacity=1, color=teal, inner sep=0.3mm] (d) {};
				
				%%% small
				\draw[rotate around={0:(-4.5,1.7)}, color=teal, fill=green, fill opacity=0.3] (-4.5,1.7) ellipse (0.35cm and 0.25cm); 	
				\draw[-,teal] (-4.83,1.8) -- (-5.4,0.2) -- (-4.2,1.56) node[midway, above]{};
				\node at (-5.4,0.2) [circle,draw, fill=teal, opacity=1, color=teal, inner sep=0.3mm] (e) {};
				
				%%% vertical
				\draw[rotate around={90:(-3.85,1.8)}, color=teal, fill=green, fill opacity=0.3] (-3.85,1.8) ellipse (0.3cm and 0.15cm); 	
				\draw[-,teal,dashed] (-4,1.8) -- (-4.8,-1.9) -- (-3.7,1.7) node[midway, above]{};

				%%%%%%%%%% red1 %%%%%%%%%%
				\draw[rotate around={90:(-5.6,-0.4)}, color=black, ultra thick, opacity=1] (-5.6,-0.4) ellipse (0.9cm and 0.7cm); % b
				
				%%% left big
				\draw[rotate around={0:(-5.55,-0.7)}, color=red, fill=magenta, fill opacity=0.3] (-5.55,-0.7) ellipse (0.5cm and 0.25cm);
				\draw[-,red,dashed] (-6.045,-0.75) -- (-5.7,-1.9) -- (-5.1,-0.8) node[midway, above]{};
				
				%%% right big
				\draw[rotate around={90:(-5.5,-0.6)}, color=red, fill=magenta, fill opacity=0.3] (-5.5,-0.6) ellipse (0.5cm and 0.25cm);
				\draw[-,red,dashed] (-5.74,-0.75) -- (-5.4,-1.9) -- (-5.27,-0.8) node[midway, above]{};
				
				%%% middle
				\draw[rotate around={90:(-5.5,-0.7)}, color=red, fill=magenta, fill opacity=0.3] (-5.5,-0.7) ellipse (0.15cm and 0.15cm); 
				\draw[-,red,dashed] (-5.64,-0.75) -- (-5.1,-1.9) -- (-5.35,-0.7) node[midway, above]{};
				
				%%% left small
				\draw[rotate around={110:(-6,-0.45)}, color=red, fill=magenta, fill opacity=0.3] (-6,-0.45) ellipse (0.25cm and 0.15cm); 
				\draw[-,red,dashed] (-6.16,-0.45) -- (-6,-1.9) -- (-5.84,-0.59) node[midway, above]{};
				
				%%%%%%%%%% green2 %%%%%%%%%%
				\draw[rotate around={90:(-0.5,2)}, color=black, ultra thick, opacity=1] (-0.5,2) ellipse (1cm and 0.9cm); % b
				
				%%% left big
				\draw[rotate around={0:(-0.5,1.6)}, color=teal, fill=green, fill opacity=0.3] (-0.5,1.6) ellipse (0.6cm and 0.3cm); 	
				\draw[-,teal] (-1.1,1.6) -- (-1.6,0.2) -- (0.045,1.47) node[midway, above]{};
				\node at (-1.6,0.2) [circle,draw, fill=teal, opacity=1, color=teal, inner sep=0.3mm] (f) {};
				
				%%% right big
				\draw[rotate around={90:(-0.5,1.9)}, color=teal, fill=green, fill opacity=0.3] (-0.5,1.9) ellipse (0.6cm and 0.3cm); 	
				\draw[-,teal] (-0.61,1.34) -- (1.2,0.2) -- (-0.35,2.42) node[midway, above]{};
				\node at (1.2,0.2) [circle,draw, fill=teal, opacity=1, color=teal, inner sep=0.3mm] (g) {};
				
				%%% small
				\draw[rotate around={90:(-0.5,1.65)}, color=teal, fill=green, fill opacity=0.3] (-0.5,1.65) ellipse (0.15cm and 0.15cm); 
				\draw[-,teal] (-0.615,1.55) -- (0.8,0.2) -- (-0.41,1.773) node[midway, above]{};
				\node at (0.8,0.2) [circle,draw, fill=teal, opacity=1, color=teal, inner sep=0.3mm] (h) {};
				
				%%% left small
				\draw[rotate around={0:(-0.5,2.2)}, color=teal, fill=green, fill opacity=0.3] (-0.5,2.2) ellipse (0.15cm and 0.15cm); 	
				\draw[-,teal,dashed] (-0.65,2.2) -- (-0.7,-1.9) -- (-0.35,2.2) node[midway, above]{};
				
				%%%%%%%%%% red2 %%%%%%%%%%
				\draw[rotate around={90:(-1.8,-0.4)}, color=black, ultra thick, opacity=1] (-1.8,-0.4) ellipse (0.9cm and 0.7cm); % b
				
				%%% left big
				\draw[rotate around={0:(-1.8,-0.8)}, color=red, fill=magenta, fill opacity=0.3] (-1.8,-0.8) ellipse (0.5cm and 0.3cm);
				\draw[-,red,dashed] (-2.3,-0.82) -- (-2,-1.9) -- (-1.325,-0.9) node[midway, above]{};
				
				%%% right big
				\draw[rotate around={90:(-1.7,-0.6)}, color=red, fill=magenta, fill opacity=0.3] (-1.7,-0.6) ellipse (0.6cm and 0.27cm);
				\draw[-,red,dashed] (-1.95,-0.82) -- (-1.7,-1.9) -- (-1.45,-0.82) node[midway, above]{};
				
				%%% middle
				\draw[rotate around={90:(-1.7,-0.8)}, color=red, fill=magenta, fill opacity=0.3] (-1.7,-0.8) ellipse (0.15cm and 0.15cm);
				\draw[-,red,dashed] (-1.85,-0.82) -- (-1.4,-1.9) -- (-1.55,-0.82) node[midway, above]{};
				
				%%% left small
				\draw[rotate around={90:(-2.2,-0.2)}, color=red, fill=magenta, fill opacity=0.3] (-2.2,-0.2) ellipse (0.15cm and 0.15cm);
				\draw[-,red,dashed] (-2.35,-0.2) -- (-2.3,-1.9) -- (-2.05,-0.2) node[midway, above]{};
				
				%%%%%%%%%% red3 %%%%%%%%%%
				\draw[rotate around={90:(1,-0.4)}, color=black, ultra thick, opacity=1] (1,-0.4) ellipse (0.9cm and 0.7cm); % b
				
				%%% left big
				\draw[rotate around={0:(1.05,-0.7)}, color=red, fill=magenta, fill opacity=0.3] (1.05,-0.7) ellipse (0.5cm and 0.25cm);
				\draw[-,red,dashed] (0.547,-0.72) -- (0.75,-1.9) -- (1.543,-0.75) node[midway, above]{};
				
				%%% right big
				\draw[rotate around={90:(1.1,-0.6)}, color=red, fill=magenta, fill opacity=0.3] (1.1,-0.6) ellipse (0.5cm and 0.25cm);
				\draw[-,red,dashed] (0.85,-0.68) -- (1.05,-1.9) -- (1.352,-0.68) node[midway, above]{};
				
				%%% middle
				\draw[rotate around={90:(1.1,-0.7)}, color=red, fill=magenta, fill opacity=0.3] (1.1,-0.7) ellipse (0.15cm and 0.15cm); 
				\draw[-,red,dashed] (0.951,-0.74) -- (1.35,-1.9) -- (1.252,-0.68) node[midway, above]{};
				
				%%% left small
				\draw[rotate around={110:(0.6,-0.45)}, color=red, fill=magenta, fill opacity=0.3] (0.6,-0.45) ellipse (0.25cm and 0.15cm); 
				\draw[-,red,dashed] (0.43,-0.41) -- (0.45,-1.9) -- (0.735,-0.65) node[midway, above]{};
				
				\node (11) at (-2.2,-2.2) {\LARGE\bf\dots};	\label{a}
				%\node (11) at (-2.2,-2.5) {(a)};	\label{a}
			\end{tikzpicture}
			
			\label{fig:UDVGd}
		\end{subfigure}
		~%\quad
		\begin{subfigure}[t]{0.49\linewidth}
			\centering
			\begin{tikzpicture}[xscale=-0.65,yscale=0.6]
				\small
				
				\node[label=right:level~{$\ell$}] at (-5,5) {};
				\node[label=right:level~{$\ell\!-\!1$}] at (-5.5,2) {};
				
				%%%%%%%%%% cyan %%%%%%%%%%
				\draw[rotate around={90:(-2.5,5)}, color=black, ultra thick, opacity=1] (-2.5,5) ellipse (1.5cm and 1.4cm); % b
				
				%%% left big
				\draw[rotate around={135:(-2.9,4.7)}, color=blue, fill=cyan, fill opacity=0.3] (-2.9,4.7) ellipse (1cm and 0.7cm);
				\draw[-,blue] (-3.7,5.29) -- (-4.7,2.6) -- (-2.5,3.85) node[midway, above]{};
				\node at (-4.7,2.6) [circle,draw, fill=blue, opacity=1, color=blue, inner sep=0.3mm] (a) {};
				
				%%% right big
				\draw[rotate around={45:(-2.1,4.7)}, color=blue, fill=cyan, fill opacity=0.3] (-2.1,4.7) ellipse (1cm and 0.7cm); 
				\draw[-,blue] (-1.3,5.29) -- (-0.3,2.6) -- (-2.5,3.85) node[midway, above]{};
				\node at (-0.3,2.6) [circle,draw, fill=blue, opacity=1, color=blue, inner sep=0.3mm] (b) {};
				
				%%% left small
				\draw[rotate around={90:(-3.1,4.8)}, color=blue, fill=cyan, fill opacity=0.3] (-3.1,4.8) ellipse (0.6cm and 0.4cm); 
				\draw[-,blue] (-3.5,4.9) -- (-4.3,2.6) -- (-2.89,4.3) node[midway, above]{};
				\node at (-4.3,2.6) [circle,draw, fill=blue, opacity=1, color=blue, inner sep=0.3mm] (c) {};		
				
				%%% middle
				\draw[rotate around={0:(-2.5,4.6)}, color=blue, fill=cyan, fill opacity=0.3] (-2.5,4.6) ellipse (0.6cm and 0.4cm); 
				\draw[-,blue] (-3.1,4.6) -- (-2,0.2) -- (-1.9,4.6) node[midway, above]{};
				\node at (-2,0.2) [circle,draw, fill=blue, opacity=1, color=blue, inner sep=0.3mm] (i) {};
				
				%%% above
				\draw[rotate around={0:(-1.79,5.9)}, color=blue, fill=cyan, fill opacity=0.3] (-2.1,4.6) ellipse (0.2cm and 0.2cm); 
				\draw[-,blue,dashed] (-2.3,4.6) -- (-4,-1.9) -- (-1.9,4.6) node[midway, above]{};
				
				%%% right small
				\draw[rotate around={90:(-1.6,5)}, color=blue, fill=cyan, fill opacity=0.3] (-1.6,5) ellipse (0.35cm and 0.25cm); 
				\draw[-,blue] (-1.81,4.8) -- (-0.7,2.6) -- (-1.35,4.9) node[midway, above]{};
				\node at (-0.7,2.6) [circle,draw, fill=blue, opacity=1, color=blue, inner sep=0.3mm] (c) {};

				%%%%%%%%%% green1 %%%%%%%%%%
				\draw[rotate around={90:(-4.5,2)}, color=black, ultra thick, opacity=1] (-4.5,2) ellipse (1cm and 0.9cm); 
				
				%%% big
				\draw[rotate around={0:(-4.5,1.8)}, color=teal, fill=green, fill opacity=0.3] (-4.5,1.8) ellipse (0.7cm and 0.5cm); 	
				\draw[-,teal] (-5.18,1.9) -- (-5.8,0.2) -- (-4,1.45) node[midway, above]{};
				\node at (-5.8,0.2) [circle,draw, fill=teal, opacity=1, color=teal, inner sep=0.3mm] (d) {};
				
				%%% small
				\draw[rotate around={0:(-4.5,1.7)}, color=teal, fill=green, fill opacity=0.3] (-4.5,1.7) ellipse (0.35cm and 0.25cm); 	
				\draw[-,teal] (-4.83,1.8) -- (-5.4,0.2) -- (-4.2,1.56) node[midway, above]{};
				\node at (-5.4,0.2) [circle,draw, fill=teal, opacity=1, color=teal, inner sep=0.3mm] (e) {};
				
				%%% vertical
				\draw[rotate around={90:(-3.85,1.8)}, color=teal, fill=green, fill opacity=0.3] (-3.85,1.8) ellipse (0.3cm and 0.15cm); 	
				\draw[-,teal,dashed] (-4,1.8) -- (-4.8,-1.9) -- (-3.7,1.7) node[midway, above]{};

				%%%%%%%%%% red1 %%%%%%%%%%
				\draw[rotate around={90:(-5.6,-0.4)}, color=black, ultra thick, opacity=1] (-5.6,-0.4) ellipse (0.9cm and 0.7cm); % b
				
				%%% left big
				\draw[rotate around={0:(-5.55,-0.7)}, color=red, fill=magenta, fill opacity=0.3] (-5.55,-0.7) ellipse (0.5cm and 0.25cm);
				\draw[-,red,dashed] (-6.045,-0.75) -- (-5.7,-1.8) -- (-5.1,-0.8) node[midway, above]{}; 				
				%%% right big
				\draw[rotate around={90:(-5.5,-0.6)}, color=red, fill=magenta, fill opacity=0.3] (-5.5,-0.6) ellipse (0.5cm and 0.25cm);
				\draw[-,red,dashed] (-5.74,-0.75) -- (-5.4,-1.8) -- (-5.27,-0.8) node[midway, above]{}; 				
				%%% middle
				\draw[rotate around={90:(-5.5,-0.7)}, color=red, fill=magenta, fill opacity=0.3] (-5.5,-0.7) ellipse (0.15cm and 0.15cm); 
				\draw[-,red,dashed] (-5.64,-0.75) -- (-5.1,-1.8) -- (-5.35,-0.7) node[midway, above]{}; 				
				%%% left small
				\draw[rotate around={110:(-6,-0.45)}, color=red, fill=magenta, fill opacity=0.3] (-6,-0.45) ellipse (0.25cm and 0.15cm); 
				\draw[-,red,dashed] (-6.16,-0.45) -- (-6,-1.8) -- (-5.84,-0.59) node[midway, above]{}; 				
				%%%%%%%%%% green2 %%%%%%%%%%
				\draw[rotate around={90:(-0.5,2)}, color=black, ultra thick, opacity=1] (-0.5,2) ellipse (1cm and 0.9cm); % b
				
				%%% left big
				\draw[rotate around={0:(-0.5,1.6)}, color=teal, fill=green, fill opacity=0.3] (-0.5,1.6) ellipse (0.6cm and 0.3cm); 	
				\draw[-,teal] (-1.1,1.6) -- (-1.6,0.2) -- (0.045,1.47) node[midway, above]{};
				\node at (-1.6,0.2) [circle,draw, fill=teal, opacity=1, color=teal, inner sep=0.3mm] (f) {};
				
				%%% right big
				\draw[rotate around={90:(-0.5,1.9)}, color=teal, fill=green, fill opacity=0.3] (-0.5,1.9) ellipse (0.6cm and 0.3cm); 	
				\draw[-,teal] (-0.61,1.34) -- (1.2,0.2) -- (-0.35,2.42) node[midway, above]{};
				\node at (1.2,0.2) [circle,draw, fill=teal, opacity=1, color=teal, inner sep=0.3mm] (g) {};
				
				%%% small
				\draw[rotate around={90:(-0.5,1.65)}, color=teal, fill=green, fill opacity=0.3] (-0.5,1.65) ellipse (0.15cm and 0.15cm); 
				\draw[-,teal] (-0.615,1.55) -- (0.8,0.2) -- (-0.41,1.773) node[midway, above]{};
				\node at (0.8,0.2) [circle,draw, fill=teal, opacity=1, color=teal, inner sep=0.3mm] (h) {};
				
				%%% left small
				\draw[rotate around={0:(-0.5,2.2)}, color=teal, fill=green, fill opacity=0.3] (-0.5,2.2) ellipse (0.15cm and 0.15cm); 	
				\draw[-,teal,dashed] (-0.65,2.2) -- (-0.7,-1.9) -- (-0.35,2.2) node[midway, above]{};
				
				%%%%%%%%%% red2 %%%%%%%%%%
				\draw[rotate around={90:(-1.8,-0.4)}, color=black, ultra thick, opacity=1] (-1.8,-0.4) ellipse (0.9cm and 0.7cm); % b
				
				%%% left big
				\draw[rotate around={0:(-1.8,-0.8)}, color=red, fill=magenta, fill opacity=0.3] (-1.8,-0.8) ellipse (0.5cm and 0.3cm);
				\draw[-,red,dashed] (-2.3,-0.82) -- (-2,-1.9) -- (-1.325,-0.9) node[midway, above]{};
				
				%%% right big
				\draw[rotate around={90:(-1.7,-0.6)}, color=red, fill=magenta, fill opacity=0.3] (-1.7,-0.6) ellipse (0.6cm and 0.27cm);
				\draw[-,red,dashed] (-1.95,-0.82) -- (-1.7,-1.9) -- (-1.45,-0.82) node[midway, above]{};
				
				%%% middle
				\draw[rotate around={90:(-1.7,-0.8)}, color=red, fill=magenta, fill opacity=0.3] (-1.7,-0.8) ellipse (0.15cm and 0.15cm);
				\draw[-,red,dashed] (-1.85,-0.82) -- (-1.4,-1.9) -- (-1.55,-0.82) node[midway, above]{};
				
				%%% left small
				\draw[rotate around={90:(-2.2,-0.2)}, color=red, fill=magenta, fill opacity=0.3] (-2.2,-0.2) ellipse (0.15cm and 0.15cm);
				\draw[-,red,dashed] (-2.35,-0.2) -- (-2.3,-1.9) -- (-2.05,-0.2) node[midway, above]{};
				
				%%%%%%%%%% red3 %%%%%%%%%%
				\draw[rotate around={90:(1,-0.4)}, color=black, ultra thick, opacity=1] (1,-0.4) ellipse (0.9cm and 0.7cm); % b
				
				%%% left big
				\draw[rotate around={0:(1.05,-0.7)}, color=red, fill=magenta, fill opacity=0.3] (1.05,-0.7) ellipse (0.5cm and 0.25cm);
				\draw[-,red,dashed] (0.547,-0.72) -- (0.75,-1.9) -- (1.543,-0.75) node[midway, above]{};
				
				%%% right big
				\draw[rotate around={90:(1.1,-0.6)}, color=red, fill=magenta, fill opacity=0.3] (1.1,-0.6) ellipse (0.5cm and 0.25cm);
				\draw[-,red,dashed] (0.85,-0.68) -- (1.05,-1.9) -- (1.352,-0.68) node[midway, above]{};
				
				%%% middle
				\draw[rotate around={90:(1.1,-0.7)}, color=red, fill=magenta, fill opacity=0.3] (1.1,-0.7) ellipse (0.15cm and 0.15cm); 
				\draw[-,red,dashed] (0.951,-0.74) -- (1.35,-1.9) -- (1.252,-0.68) node[midway, above]{};
				
				%%% left small
				\draw[rotate around={110:(0.6,-0.45)}, color=red, fill=magenta, fill opacity=0.3] (0.6,-0.45) ellipse (0.25cm and 0.15cm); 
				\draw[-,red,dashed] (0.43,-0.41) -- (0.45,-1.9) -- (0.735,-0.65) node[midway, above]{};

				\node (11) at (-2.2,-2.2) {\LARGE\bf\dots};	\label{b}
				\node[label=right:levels~{$\ell-3,\ldots,1$}] at (-3.5,-2.2) {};
				%\node (11) at (-2.2,-2.5) {(b)};	\label{b}
			\end{tikzpicture}
		\end{subfigure}
		\caption{An illustration of a (combined) canonical decomposition of the graph
			$H= G_1 \uplus G_2$ into $\ell$ levels, with the collections of fragments
			$\ca X$ (thick black circles) and of terminal sets $\ca A$ (colored
			ellipses inside them). The arrows illustrate an
			automorphism of this decomposition: straight arrows show the
			possible mapping between isomorphic fragments on the same level, as
			in \ref{it:automlev}, and wavy arrows indicate preservation of the incidence
			between attachment sets and the corresponding terminal sets, as
			stated by condition \ref{it:autombetw}.}
		\label{fig:terminalAut}
	\end{figure}

	Recall, from Section~\ref{sec:decom}, the definition of the completion
	$X^+$ of any $X\in\ca X_i$ which, in the current context, is defined with
	respect to the subgraph of $H$ induced on the union $U$ of vertex sets
	of $\ca X_{i+1}\cup\ldots\cup\ca X_\ell$ (of the higher levels from~$X$).
	This is, exactly, the completion of $X$ defined by the call to
	Procedure~\ref{proc:extract} on the level~$i$ which defined $X$ as a fragment.
	Recall also the attachment sets of $X$ which are subsets of~$U$ (in $X^+$)
	and invariant on automorphisms of~$X^+$.
	
	The {\em automorphism group of such a decomposition of~$H$} (Figure~\ref{fig:terminalAut}) acts on the
	ground set $\ca X\cup\ca A$, and consists of permutations $\varrho$ of $\ca X\cup\ca A$
	which, in particular, map $\ca X_i$ onto $\ca X_i$ and $\ca A_i$ onto $\ca A_i$
	for all $i=1,\ldots,\ell$.
	Overall, we would like the permutation $\varrho$ correspond to an actual
	automorphism of the graph~$H$, for which purpose we introduce the following definition.
	A permutation $\varrho$ of $\ca X\cup\ca A$ is an {\em automorphism of the
		decomposition $(\ca X,\ca A)$ of~$H$} if the following hold true;
	\def\Aconditions{%
		\begin{enumerate}[label={\textbf(A\arabic*)}]
			\item\label{it:automlev}
			for each $X\in\ca X_i$ where $i\in\{1,\ldots,\ell\}$, we have $\varrho(X)\in\ca X_i$,
			and there is a graph isomorphism from the completion $X^+$ to the completion $\varrho(X)^+$
			mapping the tail of $X^+$ to the tail of $\varrho(X)^+$ and the terminal sets in $\ca A^j[X]$
			to the terminal sets in $\ca A^j[\varrho(X)]$ for each $1\leq j<i$, and
			\item\label{it:autombetw}
			for every $X\in\ca X_i$ and $A\in\ca A_k^i$ where
			$i\in\{1,\ldots,\ell\}$ and $k\in\{i+1,\ldots,\ell\}$, we have that if $A$ is an
			attachment set of the fragment~$X$ (so, $A\subseteq X^+$), 
			then $\varrho(A)\subseteq\varrho(X)^+$ is the corresponding 
			attachment set of the fragment~$\varrho(X)$.
		\end{enumerate}
	}\Aconditions
	
	Notice the role of the last two conditions.
	While \ref{it:automlev} speaks about consistency of $\varrho$ with the
	actual graph $H$ on the same level, \ref{it:autombetw} on the other hand ensures
	consistency ``between the levels''.
	Right from this definition we get:
	
	\begin{repproposition}{autconsistent}\apxmark\label{prop:autconsistent}
		Let $H= G_1 \uplus G_2$ and its canonical decomposition (Procedure~\ref{proc:decomprec})
		formed by families $\ca X$ and $\ca A$ be as above.
		A permutation $\varrho$ of $\ca X\cup\ca A$ is an automorphism of this
		decomposition $(\ca X,\ca A)$ of~$H$, if and only if there exists a graph automorphism of $H$ which acts
		on $\ca X$ and on $\ca A$ identically to~$\varrho$.
	\end{repproposition}

	%%%%%%%%%%%%%%%%%%%%%%%%%%%%%%%%%%%%%%%%%%%%%%%%%%%%%%%%%%%%%%%%%%%%%%%%%%%%%%%%%%%%%%%%%%%%%%%%%%%%%%%%%%%%
	\section{Main algorithm}
	\label{sec:mainalg}
	
	We are now ready to present our main result which gives an \emph{FPT}-time
	algorithm for isomorphism of $T$-graphs (without need for a given decomposition).
	The algorithm is based on Proposition~\ref{prop:autconsistent}, and so on
	efficient checking of the conditions \ref{it:automlev} and
	\ref{it:autombetw} in the combined decomposition of two graphs.
	Stated precisely:
	
	\begin{theorem}\label{thm:main}
		For a fixed tree $T$, there is an \emph{FPT}-time algorithm that, given graphs
		$G_1$ and $G_2$, correctly decides whether $G_1\simeq G_2$, or correctly answers that
		one or both of $G_1$ and $G_2$ are not $T$-graphs.%
		\footnote{The latter outcome (`not a $T$-graph') happens when some of the
			assertions assuming a $T$-graph in Procedure~\ref{proc:extract} fails.}
	\end{theorem}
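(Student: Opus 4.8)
The plan is to reduce the question $G_1\simeq G_2$ to the existence of a suitable automorphism of the \emph{combined} canonical decomposition of $H:=G_1\uplus G_2$, and then to compute the automorphism group of that decomposition by Babai's tower-of-groups method, building the tower level by level and invoking Theorem~\ref{thm:furstgen} at each step.

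First I would run Procedure~\ref{proc:decomprec} on $G_1$ and on $G_2$. If some assertion of Procedure~\ref{proc:extract} that presupposes a $T$-graph fails during either run, the algorithm outputs ``not a $T$-graph'', which is a legal answer; otherwise it has produced, in polynomial time regardless of $d$, canonical decompositions of $G_1$ and $G_2$ into $\ell$ and $\ell'$ levels. If $\ell\neq\ell'$ we answer ``not isomorphic''; otherwise we form the combined decomposition $(\ca X,\ca A)$ of $H$ with levels $\ca X_i:=\ca X^1_i\cup\ca X^2_i$ and the corresponding terminal-set families, exactly as set up before Proposition~\ref{prop:autconsistent}. Applying Lemma~\ref{lem:reallycanonical} recursively through all levels (so that an isomorphism $G_1\simeq G_2$ propagates through the whole decomposition, matching fragments level by level) together with Proposition~\ref{prop:autconsistent}, I obtain the reformulation: $G_1\simeq G_2$ if and only if $(\ca X,\ca A)$ admits an automorphism in the sense of \ref{it:automlev}--\ref{it:autombetw} that interchanges the two parts, i.e.\ maps the fragments and terminal sets originating from $G_1$ onto those originating from $G_2$. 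Indeed, such an automorphism extends by Proposition~\ref{prop:autconsistent} to a graph automorphism of $H$, which then swaps $V(G_1)$ and $V(G_2)$ because the fragments of all levels partition $V(H)$; the converse is Lemma~\ref{lem:reallycanonical}.

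It remains to compute a generating set of the group $\Gamma$ of all part-preserving-or-part-swapping automorphisms of $(\ca X,\ca A)$, after which checking whether some generator swaps the two parts is immediate (``preserve or swap'' is a homomorphism of $\Gamma$ into $\mathbb{Z}/2$). I would set up a tower $\Gamma_0\supseteq\Gamma_1\supseteq\cdots\supseteq\Gamma_\ell=\Gamma$ on the ground set $\ca X\cup\ca A$. Here $\Gamma_0$ is the group of permutations respecting all ``syntactic'' data of the decomposition: they fix each $\ca X_i$ and each $\ca A_i$ setwise; they send the $G_1$-part onto the $G_1$-part or onto the $G_2$-part; for each fragment $X$ they preserve the family of terminal sets lying in $X$ together with their source-level labels and their linear order by inclusion; and they preserve the bookkeeping recording, for each terminal set of source level $j$, from which attachment set of which level-$j$ fragment it descends. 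This $\Gamma_0$ is an explicit structured group, written down directly. For $i=1,\dots,\ell$ let $\Gamma_i$ be the subgroup of $\Gamma_{i-1}$ of those $\varrho$ additionally satisfying \ref{it:automlev} for every $X\in\ca X_i$ and \ref{it:autombetw} for every $X\in\ca X_i$; by definition of an automorphism of the decomposition this gives $\Gamma_\ell=\Gamma$. Each step is FPT because (a) testing ``$\varrho\in\Gamma_i$'' for a given $\varrho\in\Gamma_{i-1}$ is polynomial --- condition \ref{it:automlev} for $X\in\ca X_i$ is an interval-graph isomorphism question between $X^+$ and $\varrho(X)^+$ with the tail and each inclusion-ordered, source-labelled terminal set required to go to its prescribed target, solvable in polynomial time by the known theory of interval graph isomorphism, while \ref{it:autombetw} is a direct check against the bookkeeping built into $\Gamma_0$; and (b) the index $[\Gamma_{i-1}:\Gamma_i]$ is bounded by a function of $d$ --- a level has at most $2d$ fragments of $G_1$ and at most $2d$ of $G_2$ by Procedure~\ref{proc:extract}, hence at most $4d$ in total, and any $\varrho\in\Gamma_{i-1}$ fixing each of these fragments is forced, by rigidity of finite inclusion-chains under order- and label-preserving bijections together with the attachment-set bookkeeping, to fix every terminal set and attachment set that the level-$i$ conditions mention, so it already lies in $\Gamma_i$; thus $[\Gamma_{i-1}:\Gamma_i]\leq(4d)!$. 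Theorem~\ref{thm:furstgen} then produces generators of $\Gamma_i$ from generators of $\Gamma_{i-1}$ in FPT-time, and after $\ell\leq|V(H)|$ steps we have generators of $\Gamma$; the whole algorithm is FPT, and its correctness follows from the reformulation above (soundness being guaranteed by the forward direction of Proposition~\ref{prop:autconsistent}, which needs no $T$-graph promise once the decomposition has been built).

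I expect the main obstacle to be precisely the design of $\Gamma_0$ and of the per-level refinement so that (a) and (b) hold simultaneously. The delicate point is condition \ref{it:autombetw}: an attachment set of a fragment $X$ reappears, after the later stages of Procedure~\ref{proc:decomprec}, as terminal sets that can be dispersed over many fragments on many higher levels, so one must encode enough of this correspondence in $\Gamma_0$ that fixing the $\leq4d$ fragments of a level already pins $\varrho$ down on everything the level's conditions talk about; otherwise the index need not be bounded by a function of $d$. Making this bookkeeping consistent and verifying that the resulting membership test is genuinely polynomial is where the real work lies; the group-theoretic machinery of the tower then applies routinely.
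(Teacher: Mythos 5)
Your overall architecture matches the paper's: run Procedure~\ref{proc:decomprec}, reduce $G_1\simeq G_2$ to the existence of a part-swapping automorphism of the combined decomposition via Lemma~\ref{lem:reallycanonical} and Proposition~\ref{prop:autconsistent}, and compute that automorphism group by a Furst--Hopcroft--Luks tower. The gap is in your tower design, specifically in the index bound for the step that imposes the level-$i$ conditions. You claim $[\Gamma_{i-1}:\Gamma_i]\leq(4d)!$ because any $\varrho\in\Gamma_{i-1}$ fixing the $\leq 4d$ fragments of $\ca X_i$ is ``forced by the bookkeeping'' to fix everything the level-$i$ conditions mention. This is false: the terminal sets in $\ca A[X]$ for $X\in\ca X_i$ descend from attachment sets of fragments on \emph{all} lower levels $1,\ldots,i-1$ (and the sets in $\ca A^i_k$ live on all higher levels $k$), so your bookkeeping determines $\varrho$ on $\ca A[X]$ only once $\varrho$ is known on $\ca X_j$ for every $j<i$, not on $\ca X_i$ alone. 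An element of $\Gamma_{i-1}$ may fix every level-$i$ fragment yet permute lower-level fragments, hence permute $\ca A[X]$ nontrivially, and condition \ref{it:automlev} then demands that this induced permutation be realizable by an automorphism of $X^+$ --- which it need not be (e.g., when the relevant terminal sets occupy rigid positions in $X^+$). Independent choices at $i-1$ lower levels can therefore populate $\Omega(2^{i-1})$ distinct cosets of $\Gamma_i$, so the index is bounded only by a function of $d$ \emph{and} $\ell$, which destroys the FPT claim since $\ell$ can grow with $n$.

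This is exactly the difficulty the paper's proof of Theorem~\ref{thm:mainautom} is engineered around: condition \ref{it:automlev} is \emph{not} handled inside the tower at all, but is pre-computed level by level as the groups $\Lambda_k$, using PQ-trees together with the FPT algorithm for automorphisms of set families with bounded antichains (Lemma~\ref{lem:intterminals}, Proposition~\ref{lem:setautom}) --- machinery you dismiss as ``the known theory of interval graph isomorphism,'' although computing which permutations of $\ca A[X]$ are realizable by automorphisms of $X^+$ is GI-hard in general and only becomes tractable via the antichain bound $t\leq d$. The tower then starts from $\Gamma_0=\Lambda_1\times\cdots\times\Lambda_\ell$ and imposes only \ref{it:autombetw}, with one step per \emph{pair} of levels $(i,j)$ and per \emph{cardinality} $r$ of terminal set; this slicing confines each step to at most $s$ fragments and at most $st$ terminal sets (the latter forming an antichain), which is what yields the bounded index $s!\cdot(st)!$. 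To repair your argument you would need to either adopt this finer slicing or prove a replacement for your per-level index bound; as written, the proof does not go through.
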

	
	We first state a reformulation of it as a direct corollary.
	
	\begin{repcorollary}{cormain}\apxmark\label{cor:main}
		The graph isomorphism problem of chordal graphs $G_1$ and $G_2$ is in \emph{FPT}
		parameterized by the leafage of $G_1$ and $G_2$.
	\end{repcorollary}
	
	Theorem~\ref{thm:main} now follows using Procedure~\ref{proc:decomprec},
	basic knowledge of automorphism groups and Proposition~\ref{prop:autconsistent},
	and the following refined statement.
	
	\begin{reptheorem}{mainautom}\apxmark\label{thm:mainautom}
		Assume two $T$-graphs $G_1$ and $G_2$, and their combined canonical decomposition
		(Procedure~\ref{proc:decomprec}) formed by families $\ca X$ and $\ca A$
		in $\ell$ levels, as in Section~\ref{sec:algebraic}.
		Let $s=\max_{1\leq i\leq\ell}|\ca X_i|$ be the maximum size of a
		level, and $t$ be an upper bound on the maximum antichain
		size among the terminal set families~$\ca A[X]$ over each~$X\in\ca X$.
		Then the automorphism group of the decomposition, defined by \ref{it:automlev} and \ref{it:autombetw}
		above, can be computed in \emph{FPT}-time with the parameter~$s+t$.
	\end{reptheorem}
	Notice that, in our situation, the parameter $s+t$ indeed is bounded in
	terms of $|T|$; we have $s\leq2d$ and $t\leq d$ directly from
	the arguments in Lemma~\ref{lem:cliqdd} and Procedure~\ref{proc:extract}.
	Due to space limits, we give only a sketch of the proof here.
	
	\begin{proof}[sketch]
		First, we outline that the condition \ref{it:automlev} can be dealt with (in Step~\ref{it:step1b} below)
		efficiently w.r.t.\ the parameter~$t$:
		the arguments combine the known and nice description of interval
		graphs via so-called PQ-trees \cite{recogIntervalLinear,AutMPQtrees},
		with an \emph{FPT}-time algorithm~\cite{aaolu2019isomorphism} for
		the automorphism group of set families with bounded-size antichain
		(where the latter assumption is crucial for this to work).
		
		Using the previous, we prove the rest as a commented algorithm outline:
		\begin{enumerate}\parskip3pt
			\item For every level $k\in\{1,\ldots,\ell\}$ of the decomposition of $H=G_1 \uplus G_2$ we
			compute the following permutation group $\Lambda_k$ acting on $\ca X_k\cup\ca A_k$.
			\label{it:step1b}
			\begin{enumerate}[label=\alph*)]
				\item We partition $\ca X_k$ into classes according to the isomorphism
				condition \ref{it:automlev}; i.e., $X_1,X_2\in\ca X_k$ fall into the
				same class iff there is a graph isomorphism from $X_1^+$ to $X_2^+$
				preserving the tail and bijectively mapping $\ca A^i[X_1]$ to $\ca A^i[X_2]$ for all $1\leq i<k$.
				We add the bounded-order symmetric subgroup on each such class
				of $\ca X_k$~to~$\Lambda_k$.
				\item Now, for every permutation $\varrho\in\Lambda_k$ of $\ca X_k$ and
				all $X\in\ca X_k$, and for any chosen isomorphism $\iota_X: X^+\to\varrho(X)^+$
				conforming to \ref{it:automlev}, we add to $\Lambda_k$ the
				permutation of $\ca A_k$ naturally composed of partial mappings of the
				terminal sets induced by the isomorphisms $\iota_X$ over $X\in\ca X_k$.
				\item For every $X\in\ca X_k$, we compute generators of the
				automorphism subgroup of $X^+$ (preserving the tail) which maps $\ca A^i[X]$ to $\ca A^i[X]$
				for every $1\leq i<k$, and we add to~$\Lambda_k$ the action of each such generator on 
				$\ca A[X]\subseteq\ca A_k$ (as a new generator of~$\Lambda_k$).
				This part together with (a), as outlined above,
				is a nontrivial algorithmic task~\cite{A2021autmarked} and we provide further details in the attached Appendix.
			\end{enumerate}
			
			\item We let $\Gamma_0=\Lambda_1\times\ldots\times\Lambda_\ell$ be the direct
			product of the previous subgroups.
			Notice that $\Gamma_0$ is formed by the permutations conforming to condition \ref{it:automlev}.
			
			\item Finally, we apply Babai's tower-of-groups procedure \cite{babai-bdcm} to $\Gamma_0$
			in order to compute the desired automorphism group of the decomposition.
			We loop over all pairs $1\leq i<j\leq \ell$ of levels
			and over all cardinalities $r$ of terminal sets in $\ca A_j$,
			which is $\ca O(n^3)$ iterations, and in iteration $k=1,2\ldots$ compute:
			\label{it:Btower}
			\begin{itemize}
				\item[*] $\Gamma_{k}\subseteq\Gamma_{k-1}$ consisting of exactly those
				automorphisms which conform to the condition \ref{it:autombetw} for every
				component $X\in\ca X_i$ and every terminal set $A\in\ca A^i_j$ such that $|A|=r$.
				Then $\Gamma_k$ forms a subgroup of $\Gamma_{k-1}$ (i.e., closed on a composition) thanks to
				the condition \ref{it:automlev} being true in $\Gamma_{k-1}$,
				and so we can compute $\Gamma_k$ using Theorem~\ref{thm:furstgen}.
			\end{itemize}
			
			\item We output the group $\Gamma_m$ of the last iteration $k=m$ of
			Step~\ref{it:Btower} as the result.
		\end{enumerate}
		
		Correctness of the outcome of this algorithm is self-explanatory from the
		outline; $\Gamma_m$ satisfies \ref{it:automlev} and \ref{it:autombetw} for
		all possible choice of $X$ and~$A$.
		
		We finish with a brief argument of why the computation in Step~\ref{it:Btower} via
		Theorem~\ref{thm:furstgen} is indeed efficient.
		Observe that for all $i,j$,\, $|\ca X_i|\leq s$ and the number
		of $A\in\ca A^i_j$ such that $|A|=r$ is at most~$st$.
		By standard algebraic means (counting cosets of $\Gamma_{k}$ in $\Gamma_{k-1}$),
		we get that $|\Gamma_{k-1}|/|\Gamma_{k}|$ is bounded from above by the order
		of the subgroup ``induced'' on $\ca X_i$ times the order of the subgroup
		on considered sets $A$ of cardinality $r$.
		The latter number is at most $s!\cdot(st)!$ regardless of $\Gamma_{k-1}$,
		and hence bounded in the parameter.
		\qed \end{proof}

	%%%%%%%%%%%%%%%%%%%%%%%%%%%%%%%%%%%%%%%%%%%%%%%%%%%%%%%%%%%%%%%%%%%%%%%%%%%%%%%%%%%%%%%%%%%%%%%%%%%%%%%%%%%%
	\section{Conclusions}
	\label{sec:conclu}
	
	We have provided an \emph{FPT}-time algorithm to solve the isomorphism
	problem for $T$-graphs with a fixed parameter $|T|$ and for chordal graphs
	of bounded leafage.
	There seems to be little hope to further extend this result for more
	general classes of chordal graphs since already for split graphs of
	unbounded leafage the isomorphism problem is GI-complete.
	Though, we may combine our result with that of
	Krawczyk~\cite{DBLP:journals/corr/abs-1904-04501} for circular-arc graphs
	isomorphism to possibly tackle the case of $H$-graphs for which $H$ contains
	exactly one cycle.
	% 
	% if we move away from chordality to classes of $H$-graphs where $H$
	% contains a cycle (recall that for the circular-arc graphs the isomorphism
	% problem is polynomial-time~\cite{DBLP:journals/corr/abs-1904-04501}), 
	% one interesting candidate for an extension of efficient isomorphism testing 
	% is the class of $L$-graphs (L~as~``lollipop'') where $L$ is the $4$-vertex graph 
	% obtained from the triangle by adding one pendant edge.
	
	On the other hand, an open question remains whether a similar
	decomposition technique as that in Section~\ref{sec:decom} can be
	used to solve the recognition problem of $T$-graphs in \emph{FPT}-time,
	since the currently best algorithm \cite{zemanWG} works in \emph{XP}-time.
	% This direction we plan to research in near future.

	%
	% ---- Bibliography ----
	%
	% BibTeX users should specify bibliography style 'splncs04'.
	% References will then be sorted and formatted in the correct style.
	%
	\bibliographystyle{splncs04}
	%\bibliography{bibliography}
	\bibliography{Union-bibliography}

	\newpage
	
	%%%%%%%%%%%%%%%%%%%%%%%%%%%%%%%%%%%%%%%%%%%%%%%%%%%%%%%%%%%%%%%%%%%%%%%%%%%%%%%%%%%%%%%%%%%%%%%%%%%%%%%%%%%%
	\appendix
	
	%%%%%%%%%%%%%%%%%%%%%%%%%%%%%%%%%%%%%%%%%%%%%%
	\section{Additions to Section~\ref{sec:decom}}
	
	The task here is to comment and add missing arguments to the computation of
	a canonical decomposition of a $T$-graph.
	We start with an illustration of a $T$-representation of a graph in Figure~\ref{fig:TGraphlay3}.
	
	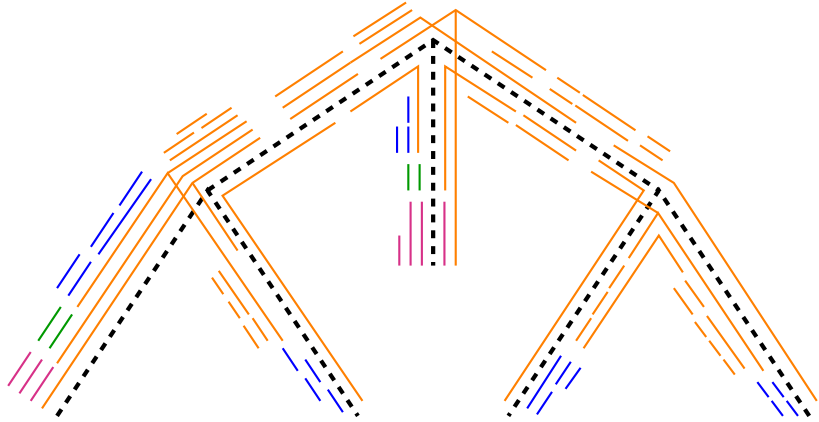
\begin{figure}[htbp]
		\centering
		\begin{tikzpicture}[scale=1.0]
			
			\draw[black, dashed, ultra thick] (-3,8) -- (0,10) -- (3,8);	
			\draw[black, dashed, ultra thick] (0,10) -- (0,7);	
			\draw[black, dashed, ultra thick] (-5,5) -- (-3,8) -- (-1,5);
			\draw[black, dashed, ultra thick] (5,5) -- (3,8) -- (1,5);
			
			%left branch
			\draw[orange, thick] (-5.2,5.1) -- (-3.2,8.1) -- (-2.3,8.7);
			\draw[orange, thick] (-3.2,8.1) -- (-2.6,7.2);%l1
			
			\draw[orange, thick] (-3.5,8.4) -- (-2.6,9);
			\draw[orange, thick] (-3.58,8.5) -- (-3.18,8.77);
			\draw[orange, thick] (-3.08,8.83) -- (-2.68,9.1);
			\draw[orange, thick] (-3.41,8.75) -- (-3.01,9.02);
			
			\draw[magenta!90!black, thick] (-5.35,5.2) -- (-5.05,5.65);
			\draw[magenta!90!black, thick] (-5.5,5.3) -- (-5.2,5.75);
			\draw[magenta!90!black, thick] (-5.65,5.4) -- (-5.35,5.85);
			
			\draw[orange, thick] (-5,5.7) -- (-3.33,8.19)-- (-2.4,8.81);%l2
			
			\draw[green!60!black, thick] (-5.1,5.9) -- (-4.8,6.35);
			\draw[green!60!black, thick] (-5.25,6) -- (-4.95,6.45);
			
			\draw[orange, thick] (-4.73,6.45) -- (-3.53,8.23) -- (-2,6);%l3
			\draw[orange, thick] (-3.53,8.23) -- (-2.48,8.92);%l3
			
			\draw[blue, thick] (-4.85,6.6) -- (-4.55,7.05);
			\draw[blue, thick] (-5,6.7) -- (-4.7,7.15);
			
			\draw[blue, thick] (-4.55,7.25) -- (-4.25,7.7);
			\draw[blue, thick] (-4.17,7.8) -- (-3.87,8.25);
			\draw[blue, thick] (-4.45,7.15) -- (-3.75,8.15);
			
			\draw[orange, thick] (-1.3,8.9) -- (-2.8,7.93) -- (-0.94,5.2);%l4
			
			\draw[blue, thick] (-1.4,5.35) -- (-1.2,5.05);
			\draw[blue, thick] (-1.7,5.5) -- (-1.5,5.2);
			\draw[blue, thick] (-1.7,5.75) -- (-1.5,5.45);
			\draw[blue, thick] (-2,5.9) -- (-1.8,5.6);
			
			\draw[orange, thick] (-2.4,6.3) -- (-2.2,6);
			\draw[orange, thick] (-2.52,6.2) -- (-2.32,5.9);
			
			\draw[orange, thick] (-2.85,6.93) -- (-2.45,6.35);
			\draw[orange, thick] (-2.72,6.52) -- (-2.57,6.3);
			\draw[orange, thick] (-2.94,6.84) -- (-2.79,6.62);
			
			%middle
			\draw[orange, thick] (-0.7,9.75) -- (0.3,10.4) -- (1.5,9.6);%three
			\draw[orange, thick] (0.3,10.4) -- (0.3,7);
			
			\draw[orange, thick] (-0.98,9.95) -- (-0.28,10.4);
			\draw[orange, thick] (-1.06,10.05) -- (-0.36,10.5);
			
			\draw[magenta!90!black, thick] (0.15,7.85) -- (0.15,7);
			\draw[magenta!90!black, thick] (-0.15,7.85) -- (-0.15,7);
			\draw[magenta!90!black, thick] (-0.3,7.85) -- (-0.3,7);
			\draw[magenta!90!black, thick] (-0.45,7.4) -- (-0.45,7);
			
			\draw[green!60!black, thick] (-0.18,8.35) -- (-0.18,8);
			\draw[green!60!black, thick] (-0.33,8.35) -- (-0.33,8);
			
			\draw[blue, thick] (-0.33,8.85) -- (-0.33,8.5);
			\draw[blue, thick] (-0.48,8.85) -- (-0.48,8.5);
			\draw[blue, thick] (-0.33,9.25) -- (-0.33,8.9);
			
			\draw[orange, thick] (-1.1,9.7) -- (-0.17,10.3) -- (1.9,8.9);%up
			
			\draw[orange, thick] (1.2,9) -- (1.9,8.53);
			\draw[orange, thick] (1.1,8.87) -- (1.8,8.4);
			
			\draw[orange, thick] (1.89,9.13) -- (2.59,8.66);
			\draw[orange, thick] (1.99,9.27) -- (2.69,8.8);
			
			\draw[orange, thick] (1.55,9.35) -- (1.85,9.15);
			\draw[orange, thick] (1.65,9.5) -- (1.95,9.3);
			
			\draw[orange, thick] (2.75,8.6) -- (3.05,8.4);
			\draw[orange, thick] (2.85,8.72) -- (3.15,8.52);
			
			\draw[orange, thick] (-1.1,9.05) -- (-0.2,9.65) -- (-0.2,8.5);%left
			
			\draw[orange, thick] (-1.9,8.95) -- (-1,9.55);
			\draw[orange, thick] (-2,9.1) -- (-1.1,9.7);
			\draw[orange, thick] (-2.1,9.25) -- (-1.2,9.85);
			
			\draw[orange, thick] (1.1,9.05) -- (0.16,9.65) -- (0.16,8);%right
			
			\draw[orange, thick] (1.32,9.5) -- (0.78,9.85);
			\draw[orange, thick] (1,8.9) -- (0.46,9.25);
			
			%right branch
			\draw[orange,thick] (0.95,5.2) -- (2.8,8) -- (2.2,8.4);%1
			
			\draw[blue, thick] (1.58,5.32) -- (1.38,5.02);
			\draw[blue, thick] (1.7,5.8) -- (1.25,5.1);
			\draw[blue, thick] (1.83,5.72) -- (1.63,5.42);
			\draw[blue, thick] (1.96,5.64) -- (1.76,5.36);
			
			\draw[orange, thick] (2.1,6.3) -- (1.9,6);
			\draw[orange, thick] (2.32,6.65) -- (2.12,6.35);
			\draw[orange, thick] (2.54,7) -- (2.34,6.7);
			
			\draw[orange, thick] (2,5.9) -- (3,7.4) -- (3.32,6.91);%2
			
			\draw[orange, thick] (4.39,5.6) -- (2.99,7.7) -- (2.6,7.1);
			\draw[orange, thick] (2.1,8.25) -- (2.99,7.7);%3
			
			\draw[orange, thick] (5.1,5.2) -- (3.2,8.1) -- (2.4,8.6);%4
			
			\draw[blue, thick] (4.85,5) -- (4.7,5.2);
			\draw[blue, thick] (4.66,5.25) -- (4.51,5.45);
			\draw[blue, thick] (4.65,5) -- (4.5,5.2);
			\draw[blue, thick] (4.46,5.25) -- (4.31,5.45);
			
			\draw[orange, thick] (4.01,5.55) -- (3.86,5.75);
			\draw[orange, thick] (3.82,5.8) -- (3.67,6);
			\draw[orange, thick] (3.63,6.05) -- (3.48,6.25);
			\draw[orange, thick] (4.14,5.63) -- (3.64,6.33);
			
			\draw[orange, thick] (3.6,6.45) -- (3.35,6.8);
			\draw[orange,  thick] (3.45,6.35) -- (3.2,6.7);
			\label{c}
			
		\end{tikzpicture}
		\caption{A $T$-graph $G$ shown by its $T$-representation (where $T$ is drawn in thick dashed black lines,
			and the subtrees shown in colored solid lines represent the vertices of~$G$).
			The picture also illustrates the first three (outermost) levels -- in order magenta, green, blue,
			of the canonical decomposition of $G$ obtained using Procedure~\ref{proc:extract}.
			The rest of $G$ is undistinguished in orange color.
			Notice that the magenta and green levels have been obtained from Step~\ref{it:outsimplic} of the procedure,
			while the blue level has come from Step~\ref{it:C1fragment}.	}
		\label{fig:TGraphlay3}
	\end{figure}

	The following observation will be useful in the coming proofs.
	\begin{itemize}
		\item[(*)] If $T$ is a tree with $d$ leaves, and we mark $d+1$ vertices of $T$,
		then some path in $T$ contains $3$ marked vertices.
	\end{itemize}
	This can be proved by successively removing unmarked leaves from $T$,
	until all $\leq d$ leaves of the remaining subtree of $T$ are marked.
	However, then also some internal vertex is marked, and it can be prolonged
	into a path ending in two leaves which are marked, too.
	
	\repeatlemma{cliqdd}
	
	\begin{proof}
		Consider a $T$-representation of the graph $G$ -- as the intersection graph
		of subtrees of a subdivision $T'$ of~$T$.
		Then every clique $X$ of $G$ must be represented such that the representatives 
		of all vertices of $X$ intersect in (at least) one common node $v[X]\in V(T')$.
		
		a) If $s>d$, then by observation (*) there exist $1\leq i<j<k\leq s$ such that the three
		(distinct) nodes $v[Z_i]$, $v[Z_j]$ and $v[Z_k]$ of $T'$ lie on one path in this order.
		Hence $Z_j$ separates $Z_i$ from $Z_k$ in~$G$ by basic properties of a $T$-representation
		and maximality of the clique~$Z_j$. 
		If $Z_i$ separates $Z_j$ from $Z_k$, too, then $Z_i\approx Z_j$. 
		Otherwise, we have $Z_i\precneqq Z_j$, and both cases lead to a contradiction.
		
		b) For $1\leq i\leq s$ and some component $F$ of $G-Z_i$, let $X_i$ be a leaf clique 
		of $G$ having a simplicial vertex in~$F$.
		As in a), assuming $s>d$, we find nodes $v[X_i]$, $v[X_j]$ and
		$v[X_k]$ of $T'$ (of three distinct maximal leaf cliques $X_i,X_j,X_k$ of~$G$)
		that lie on one path in this order.
		However, from the definition of a $T$-representation (note that the node
		$v[X_j]$ disconnects $v[X_i]$ from $v[X_k]$ in $T'$), we get that
		$Z_i\cap Z_k\subseteq Z_j$.
		From the assumption of minimality of our separators we conclude that $Z_i=Z_j=Z_k$.
		\qed \end{proof}

	\paragraph{\bf{Additions to Procedure~\ref{proc:extract}.}}
	
	We further comment on necessity of considering the joint separators in
	Step~\ref{it:nonsepZ} and subsequent steps
	of the procedure, which is the more complicated side of the procedure.
	This is, though, unavoidable in cases that $G$ contains arbitrarily many
	leaf cliques which are then mutually comparable in $\approx$.
	Such a situation is illustrated in Figure~\ref{fig:TGraphx}.
	
	We also add a bit of explanation to Step~\ref{it:C1fragment}: $|\ca C_1|\leq d+|\ca Z_0|$.
	The part $\leq|\ca Z_0|$ applies to those joined members of $\ca C_1$ which are
	fully adjacent to whole $Z\in\ca Z_0$, i.e., as $|\ca C_1\setminus\ca C_0|\leq|\ca Z_0|$.
	The part $\leq d$ applies to the remaining members of $\ca C_1$, precisely,
	as $|\ca C_1\cap\ca C_0|\leq|\ca Z_0|$ which is an application of Lemma~\ref{lem:cliqdd}(b).
	An analogous reasoning applies to the fragments collected from recursive
	calls in Step~\ref{it:recurfrag}; these satisfy the same properties with respect to the whole
	$T$-representation of $G$ as fragments directly obtained in Step~\ref{it:C1fragment}.
	\qed

	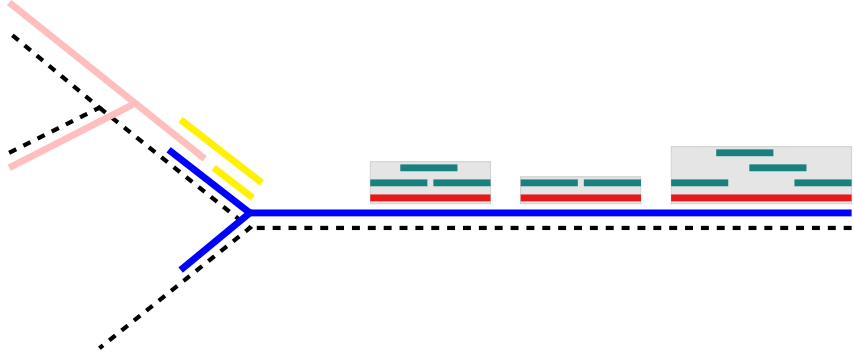
\begin{figure}[tbp]
		\centering
		\begin{tikzpicture}[xscale=-0.8,yscale=0.8]
			\draw[black, dashed, ultra thick] (-6.75,-4.5) -- (3.25,-4.5);
			\draw[black, dashed, ultra thick] (5.75,-2.5) -- (3.25,-4.5) -- (5.75,-6.5);
			\draw[black, dashed, ultra thick] (7.25,-3.25) -- (5.75,-2.5) -- (7.25,-1.25);
			
			\draw[blue, line width=2.6] (-6.75,-4.25) -- (3.25,-4.25) -- (4.6,-3.2); %B
			\draw[blue, line width=2.6]  (3.25,-4.25) -- (4.4,-5.2); %Bb
			
			\draw[yellow, line width=2.6]  (3.05,-3.75) -- (4.4,-2.7); %Y
			\draw[yellow, line width=2.6]  (3.2,-4) -- (3.85,-3.5); %Yy
			
			\draw[pink, line width=2.6] (4,-3.35) -- (7.25,-0.75); %G
			\draw[pink, line width=2.6] (5.2,-2.45) -- (7.25,-3.5); %Gg
			
			\draw[red, line width=2.6] (-6.75,-4) -- (-3.75,-4); %r1
			\draw[teal, line width =2.6] (-6.75,-3.75) -- (-5.8,-3.75); %b1	
			\draw[teal, line width =2.6] (-6,-3.5) -- (-5.05,-3.5); %b2	
			\draw[teal, line width =2.6] (-5.45,-3.25) -- (-4.5,-3.25); %b3	
			\draw[teal, line width =2.6] (-4.7,-3.75) -- (-3.75,-3.75); %b4
			
			\draw [draw=gray, fill=gray, opacity=0.2] (-6.75,-4.1) rectangle (-3.75,-3.15);
			
			\draw[red, line width=2.6] (-3.25,-4) -- (-1.25,-4); %r2
			\draw[teal, line width =2.6] (-3.25,-3.75) -- (-2.3,-3.75); %b1	
			\draw[teal, line width =2.6] (-2.2,-3.75) -- (-1.25,-3.75); %b2	
			
			\draw [draw=gray, fill=gray, opacity=0.2] (-3.25,-4.1) rectangle (-1.25,-3.65);
			
			\draw[red, line width=2.6] (-0.75,-4) -- (1.25,-4); %r3
			\draw[teal, line width =2.6] (-0.75,-3.75) -- (0.2,-3.75); %b1	
			\draw[teal, line width =2.6] (0.3,-3.75) -- (1.25,-3.75); %b2
			\draw[teal, line width =2.6] (-0.2,-3.5) -- (0.75,-3.5); %b3
			
			\draw [draw=gray, fill=gray, opacity=0.2] (-0.75,-4.1) rectangle (1.25,-3.4); 			
			
		\end{tikzpicture}
		
		\caption{A fragment of a $T$-representation of a graph $G$, where the branch
			to the right is formed from a leaf edge of $T$.
			Notice that this branch carries many maximal
			cliques (actually $7$, but we can easily build many more there)
			which all except one can be leaf cliques, and
			so Procedure~\ref{proc:extract} finds their minimal separator consisting
			of the blue vertex and outputs the whole interval subgraph as one fragment.}
		\label{fig:TGraphx}
	\end{figure}

	\repeatlemma{reallycanonical}
	
	\begin{proof}
		One may easily verify that every step of Procedure~\ref{proc:extract}
		takes into account only isomorphism-invariant properties of the graph $G$,
		does not consider the input representation of $G$
		in any way and makes no arbitrary decisions.
		Consequently, every step performed by the procedure for the input $G$
		has an ``isomorphic'' step preformed for the input $G'$.
		This extends to possible recursive calls as well, and the conclusion follows.
		\qed \end{proof}

	%%%%%%%%%%%%%%%%%%%%%%%%%%%%%%%%%%%%%%%%%%%%%%%%%
	\section{Additions to Section~\ref{sec:algebraic}}
	
	For further details regarding automorphism groups, see e.g., \cite{furst}.
	Here we briefly illustrate Babai's ``tower-of-groups'' procedure on the
	concrete example of Babai~\cite{babai-bdcm} (to which our use in
	Section~\ref{sec:mainalg} is conceptually very similar):
	A \emph{$d$-bounded color multiplicity graph} is a graph $G$ whose 
	vertex set is arbitrarily partitioned into $m$ color classes 
	$V(G)=V_1\cup\ldots\cup V_m$ such that $V_i \cap V_j = \emptyset$ for all $1 \leq i < j \leq m$.
	The number $m$ of colors is arbitrary, but for all $1\leq i\leq m$, the
	cardinality $|V_i|$, called the multiplicity of $V_i$, is at most~$d$. 
	To compute the automorphism group of such $G$, we start with $\Gamma_0$
	which freely permutes each color class of $G$ (formally, it is the product
	of the symmetric groups on each~$V_i$).
	Then, stepwise, we add the restrictions to preserve the edges between
	$V_i$ and $V_j$ for $(i,j)=(1,2),\>(1,3),\ldots,(1,m),\>(2,3),\ldots,(m-1,m)$.
	The last group $\Gamma_k$ for $k={m\choose2}$ is the automorphism group of~$G$
	and the total runtime is in \emph{FPT} with the parameter~$d$.
	
	We return in a closer detail to the crucial correspondence between
	automorphisms of $T$-graphs and automorphisms of our special decompositions
	as defined in Section~\ref{sec:algebraic}.
	Recall that a permutation $\varrho$ of $\ca X\cup\ca A$ is an {\em automorphism of the
		decomposition of~$H$} if the following hold true;
	\Aconditions
	In regard of \ref{it:autombetw} we remark that the words `corresponding
	attachment set' refer to the fact that attachment sets of $X$ are uniquely
	determined in the graph isomorphism to~$\varrho(X)$.
	
	\repeatproposition{autconsistent}
	
	\begin{proof}
		In the `only if' direction, we consider a permutation $\varrho$ of
		$\ca X\cup\ca A$ satisfying conditions \ref{it:automlev} and
		\ref{it:autombetw}, i.e., an automorphism $\varrho$ of the considered decomposition of~$H$.
		We take the mapping $\pi$ on $V(H)$ which is composed of all isomorphism
		bijections from $X\in\ca X$ to $\varrho(X)$ claimed by \ref{it:automlev}.
		Then $\pi$ indeed is a permutation of $V(H)$ since $\varrho$ is a
		permutation on~$\ca X$, and $\pi$ respects all edges of $E(H)$ which belong to some $X\in\ca X$.
		All remaining edges of $H$ are between some fragment $X\in\ca X_i$ and one
		of its attachments (``higher up'' in the decomposition) which coincides with
		some terminal set $A\in\ca A_j$ where $j>i$, by the way we decomposed~$H$.
		Condition \ref{it:autombetw} ensures that those edges of $H$ are preserved
		as well by $\pi$, and hence $\pi$ is an automorphism of~$H$.
		See Figure~\ref{fig:terminalAut}.
		
		In the `if' direction, by recursive application of Lemma~\ref{lem:reallycanonical},
		we get that
		\begin{itemize}
			\item any automorphism of $G_1$ (or of $G_2$, up to symmetry) preserves the
			fragments and the levels of the decomposition of $G_1$ and, consequently, it
			preserves also the terminal sets by their incidence with attachments of the
			fragments;
			\item if $G_1\simeq G_2$, we have an isomorphism $\iota:G_1\to G_2$ preserving 
			the fragments and the levels between $G_1$ and $G_2$, and then $\iota$ can
			be composed with any automorphism of $G_2$ from the previous point.
		\end{itemize}
		Consequently, for every graph automorphism $\sigma$ of $H$ we get an induced permutation on
		$\ca X\cup\ca A$, which indeed is an automorphism of the decomposition
		according to the conditions \ref{it:automlev} and \ref{it:autombetw} --
		simply because $\sigma$ was a graph automorphism.
		\qed\end{proof}

	%%%%%%%%%%%%%%%%%%%%%%%%%%%%%%%%%%%%%%%%%%%
	\section{Automorphisms of set families (or of hypergraphs)}
	\label{sec:automvenn}
	
	In order to efficiently compute with terminal sets introduced by
	Procedure~\ref{proc:decomprec}, we give the following technical result from
	\cite{aaolu2019isomorphism}.
	
	Let $\ca U$ and $\ca U'$ be set families over finite ground sets $Z$ and
	$Z'$ (with no additional structure), respectively.
	A~bijection $\pi$ from $\ca U$ to $\ca U'$ is called an {\em isomorphism}
	if and only if there exists a related bijection $\zeta$ from $Z$ to $Z'$ such that
	$\pi$ and $\zeta$ together preserve the incidence relation~$\in$, i.e.,
	for all $U\in\ca U$ and $z\in Z$ we have $z\in U$ $\iff$ $\zeta(z)\in \pi(U)$.
	This is essentially the same concept as that of an isomorphism between hypergraphs
	$(Z,\ca U)$ and $(Z,\ca U')$, but notice that we primarily focus on the
	mapping between the sets of $\ca U$ and $\ca U'$, and not on the mapping
	between the elements of $Z$ and $Z'$.
	An isomorphism $\pi$ from $\ca U$ to $\ca U$ is an {\em automorphism of~$\ca U$}.
	
	For any set family $\mathcal{U}$, we call
	a {\em cardinality Venn diagram} of $\mathcal{U}$ the vector
	$\big(\ell_{\mathcal{U},\mathcal{U}_1}: \emptyset\not=\mathcal{U}_1\subseteq\mathcal{U}\big)$
	such that $\ell_{\mathcal{U},\mathcal{U}_1}:=|L_{\mathcal{U},\mathcal{U}_1}|$ where
	\smallskip
	$L_{\mathcal{U},\mathcal{U}_1}=\bigcap_{A\in\mathcal{U}_1}\!A
	\setminus \bigcup_{B\in{\mathcal{U}\setminus\mathcal{U}_1}}\!B$.
	% and, specially, $\ell_\emptyset:=|U|-\big| \bigcup_{A\in\mathcal{U}}\!A \big|$.
	That is, we record the cardinality of every internal cell of the Venn diagram
	of~$\mathcal{U}$.
	Let $\pi(\mathcal{U}_1)=\{\pi(A):A\in\mathcal{U}_1\}$
	for $\mathcal{U}_1\subseteq\mathcal{U}$.
	
	The following is a straightforward but crucial observation:
	\begin{repproposition}{checkvenn}
		\label{pro:testingvenn}
		For a set family $\ca U$ over $Z$, a permutation $\pi$ of $\ca U$ is an
		automorphism of~$\ca U$ if and only if the cardinality Venn diagrams of $\mathcal{U}$ and of 
		$\pi(\mathcal{U})$ are the same, meaning that
		$\ell_{\mathcal{U},\mathcal{U}_1}=  \ell_{\mathcal{U},\pi(\mathcal{U}_1)}$
		for all~$\emptyset\not=\mathcal{U}_1\subseteq\mathcal{U}$.
		\\The latter condition can be tested in time $\ca O(m^2)$ where~$m=|\ca U|+|Z|$.
		
	\end{repproposition}
	
	Notice that the problem of computing the automorphism group of such set
	family $\ca U$ is GI-complete in general -- we can take $\ca U$ as the set
	of edges of a graph as 2-element subsets.
	Nevertheless, we can compute the group efficiently in the special case
	of our terminal sets which have bounded-size antichains:
	
	\begin{repproposition}{setautom}
		\label{lem:setautom}
		{\bf($\!\!${\cite[Algorithm~2 and Lemma~5.9]{SdT-graphs2021efficient}})}\label{lem:vennautom}
		Let $\ca U$ be a set family over a finite ground set $Z$ such that the
		maximum size of an antichain in $\ca U$ is~$d$ (i.e., there are no more than
		$d$ sets in $\ca U$ pairwise incomparable by the inclusion).
		Then the automorphism group of $\ca U$ can be computed in \emph{FPT}-time
		with respect to~$d$.
	\end{repproposition}
	
	To give a brief sketch of a proof here, we observe the following \cite[Lemma~5.7]{aaolu2019isomorphism}:
	If a permutation $\pi$ on $\ca U$ fails to be an automorphism of $\ca U$, then
	there is a subfamily $\ca U_2\subseteq\ca U$ witnessing this failure
	such that $\ca U_2$ is an antichain in the inclusion.
	Together with the bound $|\ca U_2|\leq d$ on antichains, this is enough to make 
	Babai's tower-of-groups machinery work in \emph{FPT}-time.

	%%%%%%%%%%%%%%%%%%%%%%%%%%%%%%%%%%%%%%%%%%%
	\section{Automorphisms of Interval Graphs with Marked Sets}
	
	The task here is to argue that the algorithm given in
	Theorem~\ref{thm:mainautom} computes efficiently in Step~\ref{it:step1b}, items (a) and
	(c), before finishing the full proof in the next section.
	Recall that the task is to compute the automorphism group of an interval
	graph~$G$ (which is easy in the basic setting~\cite{AutMPQtrees}), but under an
	additional constraint that a given set family $\ca A\subseteq2^{V(G)}$ (recall the terminal sets) is
	preserved by the automorphisms -- that each set from $\ca A$ is mapped into a set from $\ca A$.
	
	The latter problem is generally GI-hard since $G$ may be chosen as a clique
	and $\ca A$ as the edge set of an arbitrary graph $H$, but the crucial
	restriction in our case is that the maximum size of an antichain of sets in
	$\ca A$ is bounded (cf.~Proposition~\ref{lem:setautom}).
	Then we obtain:
	
	\begin{lemma}\label{lem:intterminals}
		Let $G$ be an interval graph and $m>0$ an integer.
		Let $\ca A^1,\ldots,\ca A^m$ be families (in general multisets) of subsets of $V(G)$ (terminal or {\em marked} sets of~$G$) such that,
		for $\ca A:=\ca A^1\cup\ldots\cup\ca A^m$,
		\begin{itemize}
			\item every set $A\in\ca A$ induces a clique of~$G$, and
			\item the maximum size of an antichain in $\ca A$ equals $t$
			(i.e., there are no more than
			$t$ sets in $\ca A$ pairwise incomparable by the inclusion).
		\end{itemize}
		Denote by $\Gamma_1$ the group consisting of those automorphisms $\sigma$ of $G$
		such that, for each $i\in\{1,\ldots,m\}$,  $\sigma$ preserves the set family $\ca A^i$.
		%, i.e., $\tau$ maps the sets of $\ca A^i$ to sets $\ca A^i$.
		Then one can in \emph{FPT}-time with the parameter $t$ (but independently of~$m$) 
		compute the group $\Gamma$ of permutations of $\ca A$ which is the action of $\Gamma_1$ on $\ca A$.
		In more detail, a permutation $\tau$ of $\ca A$ belongs to $\Gamma$, if and only if there exists an automorphism $\varrho$ of $G$
		such that, for every $i\in\{1,\ldots,m\}$ and all $A\in\ca A^i$, we have~$\tau(A)\in\ca A^i$ and~$\tau(A)=\varrho(A)$.
	\end{lemma}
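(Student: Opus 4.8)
The plan is to build the group $\Gamma$ by a tower-of-groups argument on top of the well-understood structure of interval graphs given by PQ-trees (or the equivalent MPQ-trees of \cite{AutMPQtrees}). First I would recall that the automorphism group $\mathrm{Aut}(G)$ of an interval graph $G$ acts faithfully on the set of maximal cliques of $G$, and that this action is completely captured by the automorphisms of the PQ-tree $\mathcal{T}$ of $G$ whose leaves are the maximal cliques; the automorphisms of a PQ-tree are generated by reversing the children order at Q-nodes and arbitrarily permuting the children of P-nodes, so $\mathrm{Aut}(G)$ is (a subgroup of) an iterated wreath-type product whose generators are easy to list. Each vertex $v\in V(G)$ corresponds to a contiguous block of consecutive maximal cliques, hence to a ``segment'' in every compatible clique ordering, and thus each set $A\in\ca A$ (which induces a clique, so lies in one maximal clique) can be described by which maximal cliques contain its vertices; this gives us a purely combinatorial handle on how $\sigma\in\mathrm{Aut}(G)$ moves $A$.

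The core step is then to impose, for each $i$, the constraint that the family $\ca A^i$ is preserved, and to do this without ever enumerating the (possibly exponentially large) group. I would use Babai's tower-of-groups procedure (Section~\ref{sec:algebraic}): start with $\Gamma_0=\mathrm{Aut}(G)$ given by PQ-tree generators, and refine it in stages. At a high level one first refines to automorphisms that preserve the partition of $\ca A$ into the ``types'' determined by the cardinality Venn diagram data of Proposition~\ref{pro:testingvenn} — here the bounded antichain size $t$ is exactly what makes the relevant index $|\Gamma_{k-1}|/|\Gamma_k|$ bounded by a function of $t$ (by \cite[Lemma~5.7]{aaolu2019isomorphism}, any failure of a permutation to be a $\ca U$-automorphism is witnessed by an antichain of size $\le t$, so the cosets are indexed by the action on antichains, of bounded size), allowing Theorem~\ref{thm:furstgen} to produce generators in FPT-time. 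The membership test required by Theorem~\ref{thm:furstgen} at each stage is: given $\sigma\in\Gamma_{k-1}$ (as a permutation of maximal cliques / vertices of $G$), check that $\sigma(\ca A^i)=\ca A^i$ for the relevant $i$ — this is polynomial by Proposition~\ref{pro:testingvenn}. After finitely many (polynomially many, indexed by the $m$ families and by cardinalities, as in the sketch of Theorem~\ref{thm:mainautom}) refinement stages we reach $\Gamma_1$, the subgroup of $\mathrm{Aut}(G)$ preserving every $\ca A^i$; finally we output its induced action $\Gamma$ on the ground set $\ca A$, which is immediate from a generating set of $\Gamma_1$. The ``in more detail'' clause of the statement is then just a restatement of what $\Gamma$ being this induced action means, so nothing more is needed there.

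The main obstacle I expect is the bookkeeping needed to make the tower-of-groups index genuinely bounded by a function of $t$ alone and not of $m$ or of $|\ca A|$: one must choose the refinement filtration carefully so that each quotient corresponds to the action on a bounded-size combinatorial object (an antichain of marked sets, together with the bounded local data of a single P/Q node of the PQ-tree), rather than, say, on a whole level of the tree. Concretely, the subtlety is that a single automorphism of $G$ may simultaneously move many marked sets, so the naive filtration ``fix $\ca A$ setwise, then fix it pointwise'' has unbounded steps; instead one filters by the PQ-tree structure (processing P-nodes and Q-nodes one at a time, each contributing a bounded-order local group) interleaved with the antichain-type refinement from Proposition~\ref{lem:setautom}, and argues that the two are compatible. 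This is essentially the ``nontrivial algorithmic task'' flagged in the proof sketch of Theorem~\ref{thm:mainautom} and attributed to \cite{A2021autmarked}; the remaining details — correctness of the PQ-tree action, polynomial-time membership tests, and the coset-counting bound — are then routine, and I would relegate them to the appendix.
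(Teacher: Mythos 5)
There is a genuine gap in the middle of your argument, exactly at the point you flag as the ``main obstacle'': the filtration you propose does not have quotients bounded by a function of $t$. You suggest ``processing P-nodes and Q-nodes one at a time, each contributing a bounded-order local group,'' but a P-node of the PQ-tree may have arbitrarily many children that are permuted freely, so the local group at a single P-node is a symmetric group of unbounded order; nothing in your filtration brings its index down to $f(t)$, and Theorem~\ref{thm:furstgen} as stated (and used in this paper) requires the ratio $|\Gamma_{k-1}|/|\Gamma_k|$ to be bounded in the parameter. Likewise, ``cosets indexed by the action on antichains of size $\le t$'' does not give a bounded index, since the number of such antichains is of order $|\ca A|^t$. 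A smaller point: $\mathrm{Aut}(G)$ does \emph{not} act faithfully on the maximal cliques of an interval graph (twin vertices inside a clique give a nontrivial kernel), so the reduction to PQ-tree reorderings needs the vertex-level data to be carried along, not just the clique permutation.

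The paper closes this gap with a reduction you are missing. Call a PQ-tree node \emph{clean} if its inner vertices avoid $\bigcup\ca A$; because each $A\in\ca A$ induces a clique, each $A$ can witness non-cleanness of at most one subtree at a given depth, and the witnessing sets form an antichain, so at most $t$ subtrees per depth are non-clean. Clean subtrees are collapsed into isomorphism-class annotations (computable by \cite{AutMPQtrees}), leaving a reduced tree $T'$ of bounded branching among the ``essential'' parts; the residual tree structure is then encoded back into the set family by adding, for each surviving node $q$, the set $B_q$ of vertices belonging to $q$ (this provably does not increase the maximum antichain size), after which Proposition~\ref{lem:setautom} applies directly. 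It is this collapse-and-encode step — not a finer interleaving of the tower — that makes the bounded-antichain machinery reach the unbounded-degree P-nodes. Your deferral of the remaining work to \cite{A2021autmarked} mirrors the paper's own deferral, but the paper's sketch contains this key idea while yours, as written, would stall at the first large P-node.
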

	
	\begin{proof}[sketch]
		All interval representations of an interval graph, or equivalently all its clique
		paths (recall the clique trees of chordal graphs from Section~\ref{sec:decom}), 
		can be represented by one suitable data structure called the
		{\em PQ-tree}~\cite{recogIntervalLinear}.
		A PQ-tree $T$ is a rooted ordered tree whose internal nodes are labelled as either
		P-nodes or Q-nodes, where the children of a P-node can be arbitrarily
		reordered, while the order of the children of a Q-node can only be reversed.
		The leaves of $T$ hold maximal cliques of our interval graph $G$.
		The following fact is crucial~\cite{recogIntervalLinear,AutMPQtrees}:
		\begin{itemize}
			\item[(*)] For every interval graph $G$ one can in linear time construct
			a PQ-tree $T$ (with leaves in a bijection with the maximal cliques of $G$),
			such that the permissible reorderings of $T$ are in a one-to-one
			correspondence with all interval representations of~$G$.
		\end{itemize}
		In particular, this means that every automorphism of $G$ can be represented
		as a permissible reordering of $T$ (though, not the other way round).
		
		Every node $p$ of a PQ-tree $T$ of $G$ can be associated with a subgraph of
		$G$ formed by the union of all cliques of the descendant leaves of~$p$
		-- this subgraph {\em belongs} to~$p$.
		Then for a node~$p$ we define the {\em inner vertices of~$p$} as those
		vertices of $G$ which belong to $p$ and to at least two child nodes of $p$,
		but they do not belong to any sibling node of $p$.
		(In the case of a P-node, the inner vertices of $p$ belong to all child nodes
		of $p$, but this is generally not true for Q-nodes.)
		
		The first step is to reduce the tree $T$ into a small subtree which is
		``essential'' for the sets of~$\ca A$.
		Precisely, call a node $p$ of $T$ {\em clean} if the inner vertices of $p$
		are disjoint from $\bigcup\ca A$.
		The subtree rooted at $p$ is then {\em clean} if $p$ and all descendants of
		$p$ in $T$ are clean.
		Observe that the number of non-clean subtrees at the same depth of $T$ is
		always bounded from above by~$t$.
		Indeed, since every set $A\in\ca A$ induces a clique in $G$, at most one of
		the considered non-clean subtrees can be caused by the same set $A$, and the
		witnessing sets of the non-clean subtrees form an antichain.
		Now, for every node $q$ of $T$, we use \cite{AutMPQtrees} to determine the
		isomorphism class of the subgraph belonging to the PQ-tree formed by $q$ and its clean subtrees.
		We store this information as an annotation of $q$ and discard the clean subtrees.
		Let the resulting reduced tree be denoted by~$T'\subseteq T$.
		
		We show that the automorphisms of the reduced tree $T'$ (with the
		aforementioned annotation) can be handled using the tools from
		Section~\ref{sec:automvenn}.
		By~(*), we can in a canonical (i.e., automorphism-invariant) way decompose 
		the vertex set of $G$ into layers; where layer $i$ is formed 
		by the inner vertices of the nodes of $T$ which are at depth~$i$.
		In the subsequent argument, we show that structure of the tree $T'$ can
		now be ``replaced'' by suitably chosen sets added to the terminal set family~$\ca A$.
		For each node $q$ of $T'$ we, essentially, add the set $B_q$ formed by the
		vertices of the subgraph of $G$ belonging to $q$ in~$T'$.
		This does not increase the maximum antichain size, since $B_q$ in an
		incomparable subfamily can be replaced by any set of $\ca A$ contained in~$B_q$.
		We also keep the annotation of $q$ as an annotation of the set $B_q$.
		Moreover, for a Q-node $q$, we annotate the order of the children of $q$ (in
		a reversible way).
		Finally, we can use Proposition~\ref{lem:setautom} to compute the
		(annotation-preserving) automorphism group of the resulting set family, which
		coincides with the desired permutation group $\Gamma$ when restricted onto~$\ca A$.
		
		Since the previous claims are interesting on their own, we leave
		the full detailed description for a separate paper~\cite{A2021autmarked}.
		\qed\end{proof}

	\begin{corollary}\label{cor:intterminals}
		Let, for $j=1,2$, $G_j$ be a connected interval graph, $m>0$ an integer, 
		$t$~a~parameter, and $\ca A_j^1,\ldots,\ca A_j^m$ families of subsets of $V(G_j)$,
		all as in Lemma~\ref{lem:intterminals} for each~$j\in\{1,2\}$.
		Then one can in \emph{FPT}-time with the parameter $t$ decide whether there
		exists an isomorphism from $G_1$ to $G_2$ 
		bijectively mapping $\ca A_1^i$ to $\ca A_2^i$ for all~$1\leq i\leq m$.
	\end{corollary}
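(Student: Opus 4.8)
The plan is to reduce the stated question to the automorphism computation of Lemma~\ref{lem:intterminals} via the standard disjoint-union trick, and then read off the answer from the induced action on the marked sets. Throughout I may assume every marked set is nonempty (an empty marked set is vacuous and can be stripped off first).

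First I would form the interval graph $G:=G_1\uplus G_2$ (place all intervals of $G_2$ to the right of those of $G_1$) and, for each $i\in\{1,\dots,m\}$, the family $\ca A^i:=\ca A_1^i\cup\ca A_2^i$, a disjoint union of multisets since $V(G_1)\cap V(G_2)=\emptyset$. Every set of $\ca A:=\bigcup_i\ca A^i$ still induces a clique of $G$, and the maximum antichain size in $\ca A$ is at most $2t$: an antichain meets each of the disjoint subfamilies $\ca A_1,\ca A_2$ in an antichain of size at most $t$. Hence Lemma~\ref{lem:intterminals} applies to $G$ with the families $\ca A^1,\dots,\ca A^m$ and parameter $2t$, and in \emph{FPT}-time with respect to $t$ (independently of $m$) it produces generators of the group $\Gamma$ of permutations of $\ca A$ which is the action on $\ca A$ of the group $\Gamma_1$ of all automorphisms of $G$ preserving each family $\ca A^i$.

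The key observation is that an isomorphism $G_1\to G_2$ mapping $\ca A_1^i$ bijectively onto $\ca A_2^i$ for all $i$ exists if and only if some $\tau\in\Gamma$ maps a set of $\ca A_1$ to a set of $\ca A_2$. For the forward direction, such an isomorphism $\phi$ yields the automorphism $\varrho:=\phi\uplus\phi^{-1}$ of $G$ with $\varrho(\ca A^i)=\ca A^i$ for all $i$, whose action $\tau$ on $\ca A$ lies in $\Gamma$ and swaps $\ca A_1$ with $\ca A_2$. For the converse, write $\tau=\varrho|_{\ca A}$ with $\varrho\in\Gamma_1$. Since $G_1$ and $G_2$ are connected, $G$ has exactly the two components $V(G_1),V(G_2)$, so $\varrho$ either fixes both setwise or swaps them; the hypothesis that $\tau$ moves some nonempty subset of $V(G_1)$ into $\ca A_2$ (whose members are subsets of $V(G_2)$) forces the swap. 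Then $\varrho$ restricted to $V(G_1)$ is an isomorphism $G_1\to G_2$, and since $\varrho$ preserves each $\ca A^i$ and swaps the components, it carries $\ca A_1^i=\ca A^i\cap 2^{V(G_1)}$ bijectively onto $\ca A_2^i=\ca A^i\cap 2^{V(G_2)}$, as required.

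Given generators of $\Gamma$, deciding the final condition is routine: pick any $A_0\in\ca A_1$ and compute its $\Gamma$-orbit by breadth-first search over the generators in polynomial time; output \textbf{yes} iff this orbit meets $\ca A_2$ (equivalently, iff $\Gamma$ contains a component-swapping element). The degenerate cases are when $\ca A_1$ or $\ca A_2$ is empty: if both are empty the problem is plain interval-graph isomorphism, solvable in linear time; if exactly one is empty the answer is trivially \textbf{no}. I expect the only delicate point to be the converse direction above --- namely that connectedness of $G_1$ and $G_2$ is exactly what prevents a ``spurious'' element of $\Gamma$ (one that merely shuffles components) from masquerading as an isomorphism, while preservation of each \emph{individual} family $\ca A^i$ upgrades the component swap to the per-index bijection $\ca A_1^i\to\ca A_2^i$; the remaining bookkeeping with multisets and empty sets is straightforward.
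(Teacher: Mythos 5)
Your proof is correct and follows essentially the same route as the paper: form the disjoint union $G_1\uplus G_2$, apply Lemma~\ref{lem:intterminals} to the combined families, and test whether the resulting group contains a permutation exchanging $\ca A_1$ with $\ca A_2$. Your additional details (the antichain bound of $2t$, the connectedness argument forcing the component swap, and the handling of empty sets) are all sound elaborations of what the paper leaves implicit.
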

	\begin{proof}
		We simply consider the interval graph $G$ formed as the disjoint union of
		$G_1$ and $G_2$, compute the respective permutation group on 
		$\ca A_1\cup\ca A_2$ exactly as in Lemma~\ref{lem:intterminals}
		and check whether some permutation exchanges $\ca A_1$ with $\ca A_2$.
		\qed\end{proof}
	
	We, moreover, remark that the condition in Step~\ref{it:step1b} of
	Theorem~\ref{thm:mainautom} -- namely that we require the
	isomorphisms\,/\,automorphisms to preserve the tail of $X^+$,
	can easily be respected in Lemma~\ref{lem:intterminals} by introducing 
	a separate family of a single set with the tail(s).

	%%%%%%%%%%%%%%%%%%%%%%%%%%%%%%%%%%%%%%%%%%%
	\section{Additions to Section~\ref{sec:mainalg}}
	
	We finish with the skipped detailed arguments for Section~\ref{sec:mainalg}.
	\repeatcorollary{cormain}
	
	\begin{proof}
		Let the promised leafage of $G_1$ and $G_2$ be at most $d$.
		We exhaustively try all trees $T_1,T_2,\ldots$ without degree-$2$ vertices
		and with at most $d$ leaves; their total number depends only on $d$
		(exponentially).
		For each such $T_i$ sequentially, we call the algorithm of Theorem~\ref{thm:main} 
		with $T=T_i$, and if we ever get an answer about $G_1\simeq G_2$, we output it
		and quit.
		If all answers are that $G_1$ or/and $G_2$ are not $T_i$-graphs, then the
		promise of leafage $\leq d$ is violated.
		\qed \end{proof}
	
	\repeattheorem{mainautom}
	
	\begin{proof}
		We refer to the algorithm outline in the main paper.
		Correctness of Steps~\ref{it:step1b} and~2 with respect to the condition \ref{it:automlev} is self-evident.
		Regarding efficiency of computation in Step~\ref{it:step1b} we refer to
		Lemma~\ref{lem:intterminals} and Corollary~\ref{cor:intterminals}.
		The rest of these steps follows by standard computation with groups 
		(which are represented by their sets of generators, as usually).
		
		Correctness of Step~\ref{it:Btower} is again self-evident -- we stepwise ensure that the
		resulting subgroup $\Gamma_m$ satisfies by all its members the condition
		\ref{it:autombetw}, which together with aforementioned \ref{it:automlev}
		imply that $\Gamma_m$ indeed is the automorphism group of the given
		decomposition of~$H$.
		
		\smallskip
		We now add more details to justification of proper and efficient use of
		Theorem~\ref{thm:furstgen} is Step~\ref{it:Btower}.
		In particular, this meas to show that the ratio $|\Gamma_{k-1}|/|\Gamma_k|$
		is bounded in the parameters $s,t$ in order to claim runtime in \emph{FPT}.
		For the latter, consider any two permutations $\varrho,\sigma\in\Gamma_{k-1}$
		which mutually agree on mapping of all fragments and terminal sets considered
		by Step~\ref{it:Btower} in the current iteration~$k$.
		Hence the composed permutation $\sigma^{-1}\circ\varrho$ is identical on the
		elements currently considered by Step~\ref{it:Btower}, and the condition \ref{it:autombetw}
		in the current iteration $k$ is automatically true for
		$\sigma^{-1}\circ\varrho$, meaning that $\sigma^{-1}\circ\varrho\in\Gamma_k$.
		Consequently, such $\varrho$ and $\sigma$ belong to the same coset of the
		subgroup $\Gamma_k$ in $\Gamma_{k-1}$ by the definition.
		It is well known that the number of these cosets equals $|\Gamma_{k-1}|/|\Gamma_k|$
		which we would like to estimate.
		
		Now, how many permutations in $\Gamma_{k-1}$ are there that pairwise
		disagree on mapping of all components and terminal sets considered by Step~\ref{it:Btower}?
		In iteration $k$ of Step~\ref{it:Btower} we have at most $s$ components to be mapped on
		level $i'$, at most $t$ terminal sets of cardinality $r$ in every fragment
		(since these sets form an antichain).
		This gives a possibility of at most $s!\cdot(st)!$ distinct mappings,
		and hence $|\Gamma_{k-1}|/|\Gamma_k|\leq s!\cdot(st)!$ as needed by Theorem~\ref{thm:furstgen}.
		\qed\end{proof}
	
	Finally, regarding the overall \emph{FPT}-runtime of the whole algorithm
	composed of Procedures \ref{proc:extract}, \ref{proc:decomprec},
	and Lemma~\ref{lem:intterminals} and Theorem~\ref{thm:furstgen}.
	The first two procedures run in polynomial time regardless of our
	parameters, and the latter two algorithms take each \emph{FPT}-time with
	respect to parameters which are bounded by functions of the ``master''
	parameter equal to the number of leaves of our fixed tree~$T$.

\end{document}